\newcommand{\sbf}{\boldsymbol}
\newtheorem{thm}{\underline{\bf Theorem}}
\newtheorem{remark}{\underline{\bf Remark}}
\newtheorem{proposition}{Proposition}
\newtheorem{lemma}{\underline{\bf Lemma}}
\def\bse{\begin{eqnarray*}}
\def\ese{\end{eqnarray*}}
\def\be{\begin{eqnarray}}
\def\ee{\end{eqnarray}}
\def\bq{\begin{equation}}
\def\eq{\end{equation}}
\def\bse{\begin{eqnarray*}}
\def\ese{\end{eqnarray*}}
\def\wt{\widetilde}
\def\boxit#1{\vbox{\hrule\hbox{\vrule\kern6pt  \vbox{\kern6pt#1\kern6pt}\kern6pt\vrule}\hrule}}
\newcommand*\rel@kern[1]{\kern#1\dimexpr\macc@kerna}
\newcommand*\widebar[1]{%
  \begingroup
  \def\mathaccent##1##2{%
    \rel@kern{0.8}%
    \overline{\rel@kern{-0.8}\macc@nucleus\rel@kern{0.2}}%
    \rel@kern{-0.2}%
  }%
  \macc@depth\@ne
  \let\math@bgroup\@empty \let\math@egroup\macc@set@skewchar
  \mathsurround\z@ \frozen@everymath{\mathgroup\macc@group\relax}%
  \macc@set@skewchar\relax
  \let\mathaccentV\macc@nested@a
  \macc@nested@a\relax111{#1}%
  \endgroup
}
\begin{document}

\thispagestyle{empty}
\baselineskip=28pt
{\LARGE{\bf A Divide and Conquer Algorithm of Bayesian Density Estimation}}

\baselineskip=12pt

\vskip 2mm
\begin{center}
Ya Su \\
Department of Statistics, University of Kentucky, Lexington, KY 40536-0082, U.S.A., ya.su@uky.edu\\
\hskip 5mm \\
\end{center}

\begin{center}
{\Large{\bf Abstract}}
\end{center}
\baselineskip=12pt Data sets for statistical analysis become extremely large even with some difficulty of being stored on one single machine. Even when the data can be stored in one machine, the computational cost would still be intimidating. We propose a divide and conquer solution to density estimation using Bayesian mixture modeling including the infinite mixture case. The methodology can be generalized to other application problems where a Bayesian mixture model is adopted. The proposed prior on each machine or subsample modifies the original prior on both mixing probabilities as well as on the rest of parameters in the distributions being mixed. The ultimate estimator is obtained by taking the average of the posterior samples corresponding to the proposed prior on each subset. Despite the tremendous reduction in time thanks to data splitting, the posterior contraction rate of the proposed estimator stays the same (up to a $\log$ factor) as that of the original prior when the data is analyzed as a whole. Simulation studies also justify the competency of the proposed method compared to the established WASP estimator in the finite dimension case. In addition, one of our simulations is performed in a shape constrained deconvolution context and reveals promising results. The application to a GWAS data set reveals the advantage over a naive method that uses the original prior.

\baselineskip=12pt
\par\vfill\noindent
\underline{\bf Some Key Words}: Divide and conquer, Bayesian density
estimation, Posterior contraction rate, Bayesian mixture model. 

\par\medskip\noindent
\underline{\bf Short title}:

\clearpage\pagebreak\newpage
\pagenumbering{arabic}
\newlength{\gnat}
\setlength{\gnat}{22pt}
\baselineskip=\gnat

\section{Introduction}

In an era of real big data, data sets
for statistical analysis become extremely large even with some difficulty
of storing on one single machine. Even when the data can be stored in
one machine, the computational cost would still be intimidating. As an
example, for most recent data sets in the Genome Wide
Association Study (GWAS), the number of subjects amounts to several hundreds of
thousands while the number of single-nucleotide polymorphism (SNP)
goes up to one million for
each individual.
       
A divide and conquer algorithm involves three steps. First, a
partition of $X_1, \ldots, X_n$ is distributed to $J$
machines. For simplicity, we assume that the data is randomly
partitioned with equal size so that the sample size on each machine is $m = n/J$. 
Second, individual analysis is performed to the subset data on
each machine, usually in a paralleled fashion. The last step is to combine the estimators from all $J$
machines. The computational cost of a divide and conquer algorithm
is reduced tremendously thanks to the paralleled analyses on 
much smaller data sets. The reduction in time could be
significant if the complexity of the statistical analysis is of first
or higher order of sample size.
In the Bayesian framework, different approaches
arise in this context for various purposes. To name a few,
\cite{scott2016bayes} came up with a simple procedure in terms of
both assigning prior and combining posterior
samples. \cite{srivastava2015scalable} unified all posterior
distributions on each subset, leading to an overall posterior
distribution that maintains the same concentration rate as if the
whole data has been treated together. \cite{sabnis2016divide} made a
first step in subsetting the variables in a Bayesian factor model. \cite{guhaniyogi2017divide} proposed distributed kriging for Gaussian process in spatial data.

It is well acknowledged in both frequentist and Bayesian perspectives
that some debiasing or overfitting procedure needs to be done when
analyzing the subset data, in order to obtain a combined estimator
that achieves the same accuracy as that of the original estimator when
the data is analyzed as a whole.
Methods are distinguished by how the individual analysis is
appropriately adjusted and the way that the estimators in different
machines are combined. Here our attention is given to several recent
Bayesian approaches. Within the context of signal-in-white-noise
model, \cite{szabo2017asymptotic} pointed out the necessity of
carefully choosing among several strategies for successfully achieving
the optimal convergence rate and posterior coverage probability.
\cite{scott2016bayes} and
\cite{neiswanger2014asymptotically} proposed a general
framework for modifying the prior in the divide and conquer context
and applied the method in various setups.  Other approaches
\citep{srivastava2015scalable,xue2019double} concern modifying the
likelihood in evaluating the posterior distribution together with
combination techniques that find the ``center" of the posterior
distributions on each subset. However, these methods require caution
to use when the parameter is of large or infinite
dimension, with the combination strategy too simple to be justified or too
complicated to compute. In addition, these combination techniques are
also impossible to apply when the goal is to estimate a
density on a non-Euclidean space, e.g. space of densities. A comparison with this estimator, named WASP, is of merit in the low dimension case.
 

Of the existing methods using a Bayesian procedure along with a data-splitting technique, none of them estimate an infinite dimensional
parameter nor do they deal with prior distributions over an infinite
dimensional space. Problems of this kind arise naturally in
nonparametric density estimation and become attractive in high
dimensional and nonlinear models, see Section \ref{other_app} for an
incomplete list of references. In such problems, the most popular choice of
prior one can expect is the Dirichlet process mixture of standard
densities, like Normal, Laplace, Gamma, etc. Following the
success of \cite{scott2016bayes} in examples with a finite dimensional parameter or a simple
conjugate prior, we are motivated to
take a step into more complicated scenarios, for instance, when the prior
belongs to the specific type as above.  

In the divide and conquer framework, we propose a general methodology
for assigning such priors with a focus on density estimation. The
proposed prior generalizes the idea when the parameter is finite
dimensional, that is, the prior is adjusted with the purpose of
debiasing the subset density estimators by sacrificing the uncertainty
therein. We provide the solutions to adjust priors having a Dirichlet
process component and more. We use simple averaging for combining the
individual estimators which facilitates the computation and meanwhile reduces
excessive uncertainty without introducing bias. The ultimate density estimator after combining the individual estimators is constructed under much less computational and memory burden while still achieving the optimal rate. We show in simulations and real data that the proposed procedure is also applicable to contexts beyond density estimation. Included in this paper is a confirmed success in a density deconvolution problem. By its design, the method fits easily into other contexts as long as the prior itself is composed of a(n) finite/infinite mixture of standard probability distributions and others.

The following sections are organized in this way. Section \ref{sec2} specifies two example models regarding Bayesian density estimation and the proposed priors in the divide and conquer context.  In Section \ref{sec3} we show that the proposed density estimator can achieve the optimal rate when the number of subsets is growing no faster than $\log$ rate of sample size. Two simulations are conducted in Section \ref{sec4}, illustrating the competency of our method with WASP in density estimation and the capability of an extension to density deconvolution which is motivated by a real application. The proposed method is implemented on a data set in genome-wide association study (GWAS) and results are presented in Section \ref{sec6}. The paper ends with discussions in Section \ref{sec7}.

{\bf Notations}. $\mbox{IW}(\nu, \sbf{\Psi})$ denotes an inverse wishart distribution with degrees of freedom $\nu$ and scale matrix $\sbf{\Psi}$.  $\mbox{Ga}(a, b)/\mbox{IG}(a, b)$ stands for a(n) gamma/inverse gamma distribution with shape $a$ and rate/scale $b$. $\text{Dir}(\sbf{\alpha})$ denotes a Dirichlet distribution of order $K$ with parameter $\sbf{\alpha} = \{\alpha_1, \ldots, \alpha_K\}$. $\mbox{N}_p(\sbf{\mu},\sbf{\Sigma})$ stands for a multivariate normal distribution of dimension $p$ with mean $\sbf{\mu}$ and covariance matrix $\sbf{\Sigma}$ and $\mbox{N}(\mu,\sigma^2)$ in the case $p=1$. $\mbox{DP}(M, G)$ denotes a Dirichlet process with concentration parameter $M$ and base probability measure $G$. $\text{Unif}(\theta_1, \theta_2)$ is a uniform distribution supported on the interval $[\theta_1,\theta_2]$.  All of the distributions above can be easily switched to a density by adding a dot argument as the first argument, e.g. $\mbox{N}_p(\cdot; \sbf{\mu},\sbf{\Sigma})$ denotes the density of $\mbox{N}_p(\sbf{\mu},\sbf{\Sigma})$. By convention, $\phi_\sigma$ refers to the density function of a univariate Normal distribution with mean $0$ and standard deviation $\sigma$.  The expression $a_n \asymp b_n$ states that $a_n$ and $b_n$ are of the same rate asymptotically.

\section{Model Specification}\label{sec2}

\subsection{Background}\label{sec2.1}
Suppose $X_1,\ldots,X_n \in \mathbb{R}^p$ is an independent and identically distributed sample from an unknown density $f_0(\cdot), \mathbb{R}^p \rightarrow [0,\infty)$. We are going to illustrate our idea under both parametric and nonparametric model for $f_0(\cdot)$, both characterized by a mixtures of normal distributions. The key idea can be easily extended to mixtures of distributions other than normal.

We first consider $f_0(\cdot)$ is a finite dimensional mixture of normal distributions, 
\begin{equation}
  \label{DDmixmvn}
  f_0(x) = \sum_{k=1}^K \pi_k \mbox{N}_p(x; \sbf{\mu}_k, \sbf{\Sigma}_k).
\end{equation}
The mixing probabilities $\sbf{\pi}=(\pi_1,\ldots,\pi_K)$ lie in a $(L-1)$-simplex. The $k$th component normal distribution has mean and covariance matrix $\sbf{\mu}_k$ and $\sbf{\Sigma}_k$. Together, $\sbf{\pi}$, $\{\sbf{\mu}_k\}_{k=1}^K$ and $\{\sbf{\Sigma}_k\}_{k=1}^K$ form the unknown parameters in the data generating model.

For ease of computation, a conjugate prior corresponding to
(\ref{DDmixmvn}) can be imposed \citep{srivastava2015scalable}:
\begin{eqnarray}
  \label{DDmixmvn_prior}
  \sbf{\pi} \sim \text{Dir}(\alpha_1,\ldots,\alpha_K), \quad 
  \sbf{\mu}_k \mid \sbf{\Sigma}_k \sim \mbox{N}_p(\sbf{0}, l \sbf{\Sigma}_k), \quad   
  \sbf{\Sigma}_k \sim \mbox{IW}(\nu, \sbf{S}).
\end{eqnarray}

On the other hand, if the form of the true density $f_0(\cdot)$ is unknown, in which case a nonparametric counterpart to the finite dimensional model (\ref{DDmixmvn}) and (\ref{DDmixmvn_prior}) is a popular substitute.  We are going to present the univariate case, a straightforward extension of the current algorithm and theory to a multivariate case exists; see Remark \ref{thm_mv} for a brief discussion about the theory about the multivariate case. Specifically, the nonparametric model is
\begin{eqnarray}
  \label{DPmixmvn}
   f(x) \sim \int \phi_\sigma(x-\mu) P(d\mu), \quad  \sigma \sim
  \Pi_\sigma(\cdot),  \quad  P \sim \mbox{DP}(M,G).  
\end{eqnarray}
The model (\ref{DPmixmvn}) corresponds to the so-called Dirichlet
process (location) mixtures of Normal (DPMN) prior, algorithms of which have been studied
previously \citep{rasmussen2000infinite, blei2006variational}. The
asymptotics about DPMN have been investigated in
\cite{ghosal2007posterior} for the univariate case and \cite{shen2013adaptive}
for the multivariate case.

Given the proven performance of these priors in producing a good
density estimator while running on the complete data, the following
sections will provide guidance on imposing priors when we work on
small chunks of data spread across various machines. Before
illustrating our approach regarding the density estimation problems
above, we first present a general way which has been explored for
models with finite dimensional parameters \citep{scott2016bayes}. Denote $\theta$ be the parameter of interest, $L(\theta; \sbf{x})$ be the likelihood function based on data $\sbf{x}$, $\Pi(\theta)$ be the prior on $\theta$. In the distributed setting with $J$ chunks, the likelihood function can be decomposed into $J$ components, $L(\theta; \sbf{x}) = \prod_{j=1}^J L(\theta; \sbf{x}^{(j)})$, where $\sbf{x}^{(j)}$ is the data belonging to the $j$th chunk.  The posterior distribution of $\theta$ takes the form $\Pi(\theta\mid\sbf{x}) = \prod_{j=1}^{J} \{L(\theta; \sbf{x}^{(j)}) \Pi(\theta)^{1/J}\}$ where the likelihood function and the prior are factorized similarly. This general idea paves the way of seeking for an appropriate prior on each chunk of data by assigning $\Pi(\theta)^{1/J}$.

Difficulties exist on how to justify the above idea for all cases of
$\Pi(\theta)$ where $\theta$ could be infinite dimension.  The focus
of this paper is to address this issue for a family of models
including but not limited to (\ref{DDmixmvn}) and (\ref{DPmixmvn}).
It is seemingly hard to handle priors with its support on a probability space with the existing literature because these priors involve a a ``distribution on distribution" component corresponding to the Dirichlet distribution or the Dirichlet process prior and some independent prior distributions on the rest of parameters in the component densities. In what follows we describe the modification to these priors in the divide and conquer context. The prior on the parameters in component distributions which usually takes a conjugate form against the likelihood will be imposed as the same type. On the other hand, we propose to make a simple adjustment on the parameters of the Dirichlet prior from the property of Dirichlet distribution or process.

\subsection{Finite mixtures of Normal prior}\label{fmnp}
As introduced in Section \ref{sec2.1}, the prior for a finite mixture of normal
model takes the form \eqref{DDmixmvn_prior}. We start with a basic
property of the Dirichlet distribution.

\begin{proposition}\label{DD_moment}{\rm
  Let $\sbf{\pi} \sim \text{Dir}(\sbf{\alpha})$, where
  $\sbf{\alpha} = (\alpha_1,\ldots,\alpha_K)$. Denote $\alpha_s =
  \sum_{k=1}^K \alpha_k$. Then
  $E(\pi_k)=\alpha_k/\alpha_s$,
  $\text{var}(\pi_k)=\alpha_k(\alpha_s-\alpha_k)/\alpha_s^2
  (\alpha_s+1)$, $\text{cov}(\pi_j,\pi_k)=-\alpha_j \alpha_k/\alpha_s^2
  (\alpha_s+1)$. 
 }
\end{proposition}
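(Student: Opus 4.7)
The plan is to derive the moments from two standard representations of the Dirichlet distribution, handling the mean and the variance via the Beta marginals, and the covariance via an aggregation trick (or equivalently via direct integration).

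First I would establish the marginal distribution of each coordinate. By integrating out all components except $\pi_k$ from the Dirichlet density on the simplex, one obtains $\pi_k \sim \text{Beta}(\alpha_k,\,\alpha_s-\alpha_k)$. From standard Beta moment formulas this immediately yields $E(\pi_k)=\alpha_k/\alpha_s$ and $\text{var}(\pi_k)=\alpha_k(\alpha_s-\alpha_k)/[\alpha_s^2(\alpha_s+1)]$, which are the first two statements. Alternatively one can cite the gamma representation $\pi_k = Y_k/\sum_{\ell} Y_\ell$ with $Y_\ell \sim \text{Ga}(\alpha_\ell,1)$ independent, which gives the same marginal.

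For the covariance, my preferred route is the aggregation property of the Dirichlet: for any $j\neq k$, the merged vector $(\pi_1,\ldots,\pi_j+\pi_k,\ldots)$ (with the $j$th and $k$th coordinates replaced by their sum and one of them removed) is Dirichlet with the corresponding parameters merged. Hence $\pi_j+\pi_k \sim \text{Beta}(\alpha_j+\alpha_k,\,\alpha_s-\alpha_j-\alpha_k)$, so by the Beta variance formula
\[
\text{var}(\pi_j+\pi_k)=\frac{(\alpha_j+\alpha_k)(\alpha_s-\alpha_j-\alpha_k)}{\alpha_s^2(\alpha_s+1)}.
\]
Expanding the left-hand side as $\text{var}(\pi_j)+\text{var}(\pi_k)+2\,\text{cov}(\pi_j,\pi_k)$, substituting the marginal variances from the previous step, and solving algebraically delivers $\text{cov}(\pi_j,\pi_k)=-\alpha_j\alpha_k/[\alpha_s^2(\alpha_s+1)]$.

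There is really no main obstacle here — the result is classical and the argument reduces to routine algebraic manipulation once the Beta marginals and aggregation property are invoked. The only care needed is in bookkeeping the $\alpha_s+1$ denominator consistently. If a self-contained derivation without citing aggregation is preferred, one may instead compute $E(\pi_j\pi_k)$ directly by recognizing that $\pi_j^{1}\pi_k^{1}\prod_{\ell}\pi_\ell^{\alpha_\ell-1}$ on the simplex integrates to the Dirichlet normalizing constant with parameters shifted by $+1$ in coordinates $j$ and $k$, giving $E(\pi_j\pi_k)=\alpha_j\alpha_k/[\alpha_s(\alpha_s+1)]$, whence the covariance follows by subtracting the product of means.
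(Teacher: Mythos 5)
Your derivation is correct. The paper does not actually supply a proof of Proposition~1; it simply points to Chapter~27 of Balakrishnan's primer for these classical Dirichlet moments. Both routes you sketch are sound: the Beta-marginal argument gives the mean and variance directly, and for the covariance either the aggregation property ($\pi_j+\pi_k \sim \text{Beta}(\alpha_j+\alpha_k,\,\alpha_s-\alpha_j-\alpha_k)$, then unwind $\text{var}(\pi_j+\pi_k)$) or the direct normalizing-constant computation of $E(\pi_j\pi_k)=\alpha_j\alpha_k/[\alpha_s(\alpha_s+1)]$ yields $\text{cov}(\pi_j,\pi_k)=-\alpha_j\alpha_k/[\alpha_s^2(\alpha_s+1)]$. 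Since the paper offers no proof to compare against, the only remark worth making is that your argument is a complete, self-contained alternative to the paper's citation, and the final self-contained route (integrating against the shifted Dirichlet normalizing constant) is the one most often reproduced in textbooks, so it is a safe choice if the goal is to avoid invoking aggregation as a lemma.
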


The Dirichlet distribution prior on $\sbf{\pi}$,
$\text{Dir}(\alpha_1,\ldots,\alpha_K)$, is proportional to
$\prod_{k=1}^K \pi_k^{\alpha_k - 1}$. Proposition \ref{DD_moment} (a
proof can be found in Chapter 27 of \cite{balakrishnan2004primer})
states that scaling $\sbf{\alpha}$ by a factor $\gamma$ scales up the variance by a factor $(1 + \alpha_s)/(1 +
\gamma \alpha_s)$ while keeping the mean unchanged component-wise. This property is essential and provides
directions to adjust the prior on $\sbf{\pi}$. We propose to scale the
parameters of the Dirichlet distribution prior on $\sbf{\pi}$ by $1/J$ leading to
$\text{Dir}(\alpha_1/J,\ldots,\alpha_K/J)$.  Indeed, we can make an
assertion that this simple adjustment sacrifices
  uncertainty in exchange for debiasing regarding $\sbf{\pi}$
according to Proposition \ref{DD_moment}.

We take the same strategy as Concensus Monte Carlo
  \citep{scott2016bayes} for adjusting the
priors on the remaining parameters. Raising a power $1/J$ to the normal prior on $\sbf{\mu}_k \mid \sbf{\Sigma}_k$ leads to
$\mbox{N}_p(0, l \sbf{\Sigma}_k J)$. It can be shown easily that the effect of a power
$1/J$ to the inverse wishart prior on $\sbf{\Sigma}_k$, $\mbox{IW}(\nu, \sbf{S})$, is chracterized by an inverse wishart type $\mbox{IW}(\nu/J - (p+1)(J-1)/J, \sbf{S}/J)$. The risk
of its first parameter being possibly negative will be regulated by the likelihood when
constructing the (conditional) posterior distribution of $\sbf{\Sigma}_k$. A Gibbs sampler corresponding to the
modified prior above is provided in Section S.1.1 in the
Supplementary material.

\subsection{Dirichlet process mixtures of Normal prior}\label{dpmnp}
Although a Dirichlet process has a remarkable stick-breaking representation \citep{sethuraman1994constructive}, unfortunately, it does not have a probability density as a Dirichlet distribution does. However, we can extend the idea in Section \ref{fmnp} to Dirichlet process since marginally a Dirichlet process follows a Dirichlet distribution. That is, if $P \sim \mbox{DP}(M,G)$, for any measurable finite partition $\{A_1, \ldots, A_k\}$ of the support of the base measure $G$, $(P(A_1), \ldots, P(A_k)) \sim \text{Dir}(G(A_1)M,\ldots,G(A_k)M)$.

Our idea is to modify the parameters associated with a Dirichlet process such that the relationships between its subsequent marginal distributions and those under the Dirichlet process with the original parameters are maintained to be the same as that in Section \ref{fmnp}.  This can achieved by adjusting the prior on $P$ as $\mbox{DP}(M/J,G)$ under which $(P(A_1), \ldots, P(A_k)) \sim \text{Dir}(G(A_1)M/J,\ldots,G(A_k)M/J)$ for the above partition. Hence regarding the nonparametric model \eqref{DPmixmvn}, the following prior is suggested:
\begin{eqnarray}
  \label{DPmixmvn_pprior}
   f(x) \sim \int \phi_\sigma(x-\mu) P(d\mu), \quad \sigma \propto                                  
     \Pi_\sigma^{1/J}(\cdot), \quad  P \sim \mbox{DP}(M/J,G).  
\end{eqnarray}
The general form of the prior on $\sigma$, $\Pi_\sigma^{1/J}$, can be simplified if $\Pi_\sigma$ takes a parametric form. In the case when a conjugate prior for $\sigma$, $\Pi_\sigma \sim \mbox{IG}(a, b)$, is adopted, it becomes $\Pi_\sigma^{1/J} \sim \mbox{IG}(a/J-(J-1)/J, b/J)$. Indeed it is just an inverse gamma type since the first argument is negative when $a < (J - 1)$, but similar arguments about the prior on $\sbf{\Sigma}_k$ in Section \ref{fmnp} apply here. A Gibbs sampler corresponding to \eqref{DPmixmvn_pprior} is provided in Section S.1.2 in the Supplementary material.

\subsection{A combined density estimator}\label{concept_in_thm}

Let $\sbf{X}_j$ be the subset of data distributed to the $j$th machine, $j = 1, \ldots, J$. Denote the posterior probability under the model in Section \ref{fmnp} or \ref{dpmnp} in accordance to the $j$th subset as $\Pi_m(\cdot \mid \sbf{X}_j)$, where subscript $m$ indicates the distributed sample size. Our procedure proceeds as follows. For each subset $j$, we could obtain an estimator $f^{[j]}(\cdot) \in \mathcal{P}$ by taking a random sample from the posterior distribution $\Pi_m(\cdot \mid \sbf{X}_j)$.  The ultimate estimator is then formed by a simple average over all subset samples, $\widebar{f}(\cdot) = (1/J) \sum_{j=1}^J f^{[j]}(\cdot)$.

Let $\widebar{\Pi}_n(\cdot \mid \sbf{X}_n)$ denote the distribution of
$\widebar{f}(\cdot)$. Then it is easy to show that
$\widebar{\Pi}_n(\cdot \mid \sbf{X}_n)$ is a convex convolution of all
subset posterior densities, $\Pi_m(\cdot \mid \sbf{X}_j)$, $j = 1,
\ldots, J$. Although we will provide asymptotics of $\widebar{\Pi}_n(\cdot \mid
\sbf{X}_n)$ in Section \ref{main}, we discuss the appropriateness
of proposing $\widebar{f}(\cdot)$, or equivalently $\widebar{\Pi}_n(\cdot \mid \sbf{X}_n)$. The
major consequence of us modifying the prior is that the center of $\Pi_m(\cdot \mid
\sbf{X}_j)$ is pulled towards the targeted posterior distribution
while admitting larger variability. We construct the aforementioned combined density
/posterior distribution as it keeps the center unchanged while reducing the
variability in subset posterior distributions.

\subsection{Other applications}\label{other_app}
Our divide and conquer algorithm goes beyond density estimation
problem as long as the prior of the relevant model consists of 
a Dirichlet distribution/process component, which is often seen along with
models characterized by a(n) finite/infinite mixture of standard probability
distributions. The popularity of such prior has risen in recent years
with appearances in high dimensional normal means problem
\citep{bhattacharya2015dirichlet}, multivariate categorical data
with dependency \citep{dunson2009nonparametric}, and nonlinear
regression models \citep{de2010adaptive, naulet2018some}, just to name
a few.  

\section{Theoretical results}\label{sec3}

\subsection{Preliminary definitions}

The set of density functions is $\mathcal{P} = \{f(\cdot):
f(\cdot) \geq 0, \int
f(x) dx = 1\}$. We consider the metric on $\mathcal{P}$ to be the Hellinger distance
$h(\cdot, \cdot)$. For any $p, q \in \mathcal{P}$, $h(p, q) = \{\int (p^{1/2}(x) - q^{1/2}(x))^2 dx\}^{1/2}$. The Wasserstein space
of order $2$ is defined as $\mathscr{P}_2 = \{\mu \in
\mbox{probability measure on } \mathcal{P}: \int_{\mathcal{P}} h^2(f, f_0) d\mu(f) <
\infty\}$. For any $\mu, \nu \in \mathscr{P}_2$, $\Xi(\mu, \nu)$ is a
set of all probability measures on $\mathcal{P} \times \mathcal{P}$
whose marginal measures are $\mu$ and $\nu$. The Wasserstein distance
of order $2$ on $\mathscr{P}_2$ is defined as $W_2^2(\mu, \nu) = \inf_{\xi \in \Xi(\mu, \nu)} \int_{\mathcal{P} \times \mathcal{P}} h^2(p, q) d\xi(p, q)$.

In particular, if one of the probability measures is concentrated on a
fixed element in $f_0 \in \mathcal{P}$, e.g., $\nu = \delta_{f_0}$,
the Wasserstein metric becomes $W_2^2(\mu, \nu) = \int_{\mathcal{P}}
h^2(f, f_0) d\mu(f)$. Thus in the context of this paper, the Wasserstein distance between
any posterior distribution $\Pi_n(\cdot \mid \sbf{X}_n)$ on
$\mathcal{P}$ and $\delta_{f_0}$ is  
\begin{eqnarray}
  \label{wass_dist}
W_2^2(\Pi_n,\delta_{f_0}) = \int_{\mathcal{P}} h^2(f, f_0) d \Pi_n(f
  \mid \sbf{X}_n).
\end{eqnarray}
Hence
\begin{align*}
  W_2^2(\Pi_n,\delta_{f_0}) \leq \epsilon_n^2 + 2 \Pi_n(\{f \in
  \mathcal{P}: h(f, f_0) \geq \epsilon_n\} \mid \sbf{X}_n).
\end{align*}
The above inequality is due to the fact that the Hellinger distance is less
or equal to $2$. Hence we conclude that a typical posterior
contraction rate result, $\Pi_n(\{f \in \mathcal{P}: h(f, f_0) \geq
\epsilon_n\} \mid \sbf{X}_n) \rightarrow 0$,
  is sufficient to prove a convergence of $\Pi_n(\cdot \mid \sbf{X}_n)$ to
  $\delta_{f_0}$ with rate $\epsilon_n$ in Wasserstein distance. In Section \ref{main} we will present our
  theoretical results in terms of the latter.   

  \begin{lemma}\label{conv_h}{\rm
  The Hellinger distance is a biconvex functional on $\mathcal{P} \times
  \mathcal{P}$, that is, for any
  $g_1, g_2, f \in \mathcal{P}$, and $\omega_1 \geq 0, \omega_2 \geq 0$ with
  $\omega_1 + \omega_2 = 1$, $h(\omega_1 g_1 + \omega_2 g_2, f)
  \leq \omega_1 h(g_1, f) + \omega_2 h(g_2, f)$. Similarly, $h(f, \omega_1 g_1 + \omega_2 g_2)
  \leq \omega_1 h(f, g_1) + \omega_2 h(f, g_2)$.}
  \end{lemma}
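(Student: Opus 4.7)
The plan is to reduce to convexity in the first argument---the second follows immediately by the symmetry $h(p,q)=h(q,p)$---and then exploit the $L^2$ representation $h(p,q)=\|\sqrt{p}-\sqrt{q}\|_2$. Fix $f \in \mathcal{P}$ and view $p \mapsto h(p, f)$ as a functional on $\mathcal{P}$.

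The natural first step is pointwise convexity of the integrand. For each fixed $x$, the map $t\mapsto(\sqrt{t}-\sqrt{f(x)})^2$ on $[0,\infty)$ has second derivative $\sqrt{f(x)}/(2t^{3/2})\ge 0$, hence is convex. Substituting $t=\omega_1 g_1(x)+\omega_2 g_2(x)$ and integrating in $x$ yields
\[
h^2(\omega_1 g_1+\omega_2 g_2,\, f)\;\le\;\omega_1\, h^2(g_1,f)+\omega_2\, h^2(g_2,f),
\]
which is biconvexity of the \emph{squared} Hellinger distance.

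The hard part will be passing from this squared inequality to the unsquared one claimed in the lemma. A direct square root gives only $h(\omega_1 g_1+\omega_2 g_2, f)\le (\omega_1 h^2(g_1,f)+\omega_2 h^2(g_2,f))^{1/2}$, and by the power-mean inequality this right-hand side in fact exceeds $\omega_1 h(g_1,f)+\omega_2 h(g_2,f)$, so the bound runs in the wrong direction. I would therefore try to argue directly at the level of the $L^2$ norm via Minkowski's inequality. Concavity of $\sqrt{\cdot}$ gives $\sqrt{\omega_1 g_1+\omega_2 g_2}\ge \omega_1\sqrt{g_1}+\omega_2\sqrt{g_2}$ pointwise, and Minkowski yields
\[
\|\omega_1\sqrt{g_1}+\omega_2\sqrt{g_2}-\sqrt{f}\|_2 \;\le\; \omega_1\, h(g_1,f)+\omega_2\, h(g_2,f).
\]
Closing the proof would then require controlling the leftover $L^2$ discrepancy $\|\sqrt{\omega_1 g_1+\omega_2 g_2}-\omega_1\sqrt{g_1}-\omega_2\sqrt{g_2}\|_2$, a nonnegative quantity that vanishes only when $g_1=g_2$ almost everywhere. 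Absorbing this term---whether via a more delicate pointwise comparison that trades off the concavity gap against the Minkowski split, or by a direct manipulation of $\int\sqrt{(\omega_1 g_1+\omega_2 g_2) f}$ using the Bhattacharyya-coefficient identity $h^2=2-2\int\sqrt{pq}$---is the step I expect to be the central technical obstacle.
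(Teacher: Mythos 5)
Your instinct to distrust the unsquared inequality was exactly right, and the ``central technical obstacle'' you anticipated is in fact unbridgeable: the lemma as stated is false, so no proof exists. For a counterexample, take $g_1 = f$ supported on a set $A$ and $g_2$ supported on a disjoint set $B$, with $\omega_1 = \omega_2 = 1/2$. Then $h(g_1,f) = 0$ and $h^2(g_2,f) = 2$, so the right-hand side equals $\sqrt{2}/2 \approx 0.707$. On the other hand $(g_1+g_2)/2 = g_1/2$ on $A$ and $g_2/2$ on $B$, so
\[
h^2\bigl(\tfrac{1}{2}g_1 + \tfrac{1}{2}g_2,\, f\bigr)
= \int_A \bigl(\sqrt{g_1/2} - \sqrt{g_1}\bigr)^2 + \int_B \bigl(\sqrt{g_2/2}\bigr)^2
= \bigl(1 - \tfrac{1}{\sqrt{2}}\bigr)^2 + \tfrac{1}{2}
= 2 - \sqrt{2},
\]
and the left-hand side is $\sqrt{2-\sqrt{2}} \approx 0.765 > 0.707$. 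The Hellinger distance is not convex in either argument; your Minkowski computation confirms this quantitatively, since the pointwise gap $\sqrt{\omega_1 g_1 + \omega_2 g_2} - \omega_1\sqrt{g_1} - \omega_2\sqrt{g_2}$ lives on exactly the wrong side of $\sqrt{f}$ in this example.

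What is true is the result you proved in the first half of your write-up: the \emph{squared} Hellinger distance is biconvex, $h^2(\omega_1 g_1 + \omega_2 g_2, f) \leq \omega_1 h^2(g_1,f) + \omega_2 h^2(g_2,f)$. (A shorter route than your second-derivative computation: $h^2(p,q) = 2 - 2\int\sqrt{pq}$ and the affinity $\int\sqrt{pq}$ is concave in each argument by concavity of $\sqrt{\cdot}$, which is essentially the Bhattacharyya identity you mention at the end.) This weaker statement is all the subsequent argument actually needs. Feeding it into the proof of Lemma~\ref{conv_w} yields $W_2^2(\widebar{\Pi}_n, \delta_{f_0}) \leq J^{-1}\sum_{j=1}^J W_2^2(\Pi_m(\cdot\mid\sbf{X}_j), \delta_{f_0})$, hence $W_2(\widebar{\Pi}_n, \delta_{f_0}) \leq \max_j W_2(\Pi_m(\cdot\mid\sbf{X}_j), \delta_{f_0}) \lesssim \epsilon_m$, which is exactly what Theorem~\ref{main_thm} requires. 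The displayed bound in Lemma~\ref{conv_w} (by the arithmetic mean of the $W_2$'s rather than their root-mean-square) is stronger than what holds and should be weakened accordingly, but the contraction-rate conclusion is unaffected.
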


\subsection{Main theorems}\label{main}

To see the asymptotic behavior of $\widebar{f}(\cdot)$, equivalently, we can study the underlying distribution which yields $\widebar{f}(\cdot)$, $\widebar{\Pi}_n(\cdot \mid \sbf{X}_n)$. By definition of $\widebar{f}(\cdot)$, $\widebar{\Pi}_n(\cdot \mid \sbf{X}_n)$ is a (convex) convolution of the subset posterior distributions $\Pi_m(\cdot \mid \sbf{X}_j), j = 1, \ldots, J$. Specifically, for any functional $L(\cdot)$ on $\mathcal{P}$,
\begin{eqnarray} \label{convex_conv} \int_{\mathcal{P}} L(f) d \widebar{\Pi}_n(f \mid \sbf{X}_n) = \int_{\mathcal{P} \times \cdots \times \mathcal{P}} L({\textstyle \sum\nolimits_{j=1}^J} f_j/J) d \Pi_m(f_1 \mid \sbf{X}_1) \cdots d\Pi_m(f_J \mid \sbf{X}_J).
\end{eqnarray} 
It is trivial to show that $\widebar{f}(\cdot)$ corresponds to a sample drawn from $\widebar{\Pi}_n(\cdot \mid \sbf{X}_n)$.

We are going to state the posterior contraction rate for
$\widebar{\Pi}_n(\cdot \mid \sbf{X}_n)$. For illustration purposes, we are
going to state the theory for the nonparametric density estimator under the Dirichlet process mixtures of normal prior \eqref{DPmixmvn_pprior}. 

Let $a_1, \ldots, a_5$, $b_1, \ldots, b_5$ and $C_1, \ldots, C_4$ be
positive constants. Denote $C^{\beta, L, \tau_0}$
as the locally $\beta$-H\"older function class with functions that
have finite partial derivatives $f^{(k)}(\cdot)$ up to order $k \leq \lfloor \beta \rfloor$
such that for all $k \leq \lfloor \beta \rfloor$, $
 |f^{(k)}(x + y) - f^{(k)}(x)| \leq L(x) \exp(\tau_0|y|^2) |y|^{\beta
   - \lfloor \beta \rfloor}$.

The key assumptions are \\
(C1) $1 - G([-x,x]) \leq b_1 \exp(-C_1 x^{a_1})$ for sufficiently large $x > 0$.\\
(C2) $\Pi_\sigma\{(0, x)\} \leq b_2 \exp(- C_2 x^{-a_2})$, for
sufficiently small $x > 0$. $\Pi_\sigma\{(x, \infty)\} \leq b_3
x^{-a_3}$, for sufficiently large $x > 0$. For any $s, t > 0$
$\Pi_\sigma\{(s^{-1}(1 + t)^{-1/2}, s^{-1})\} \geq b_4 s^{-a_4} t^{a_5} \exp(-C_3 s^{-1})$. \\
(C3) $f_0(\cdot) \in C^{\beta, L, \tau_0}$. $P_0(|D^kf_0|/f_0)^{(2
  \beta + \epsilon)/k} < \infty$ for all integer $k \leq \lfloor \beta
\rfloor$. $P_0(L/f_0)^{(2 \beta + \epsilon)/ \beta} < \infty$. $f_0(x)
\leq b_5 \exp(-C_4|x|^\tau)$ for sufficiently large $|x|$ and some
$\tau > 0$. \\
(C4) $J \asymp \log n$, equivalently, $m \asymp n \log^{-1} n$.  

The first three conditions are the same as those in
\cite{shen2013adaptive} in the univariate case. The last condition is on the growing rate of subsets in our divide and conquer setup.

\begin{thm}\label{conv_rate_sub}{\rm
If assumptions (C1)\textemdash(C4) are satisfied, for any
$j=1,\ldots,J$ and any $t > 3/2 + (1/\tau)/(2 + 1/\beta)$, the
posterior distribution $\Pi_m(\cdot \mid \sbf{X}_j)$ converges to $f_0(\cdot)$ in Wasserstein distance with contraction rate $\epsilon_m = m^{-\beta/(2\beta + 1)} (\log m)^t$, that is,
\begin{eqnarray*}
  W_2(\Pi_m(\cdot\mid\sbf{X}_j), \delta_{f_0}) \lesssim \epsilon_m, \quad j=1,\ldots,J.
\end{eqnarray*}
Here $\delta_{f_0}$ is the dirac measure on $\mathcal{P}$ concentrating at $f_0(\cdot)$.
}
\end{thm}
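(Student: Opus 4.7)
The plan is to first establish Hellinger-distance posterior contraction for each subset posterior $\Pi_m(\cdot \mid \sbf{X}_j)$ at rate $\epsilon_m$ under the modified prior \eqref{DPmixmvn_pprior}, and then convert to Wasserstein contraction via the inequality $W_2^2(\Pi_m, \delta_{f_0}) \leq \epsilon_m^2 + 2\Pi_m(\{f: h(f, f_0) \geq \epsilon_m\} \mid \sbf{X}_j)$ derived in Section~\ref{sec3}. For the Hellinger step I would apply a general posterior contraction theorem of Ghosal--Ghosh--van~der~Vaart type, which requires verifying, at rate $\epsilon_m$ and with $m$ observations, a KL prior-mass lower bound $\Pi(B_{KL}(f_0, \epsilon_m)) \geq \exp(-c m \epsilon_m^2)$, existence of a sieve $\mathcal{F}_m \subset \mathcal{P}$ whose Hellinger $\epsilon_m$-entropy is of order $m \epsilon_m^2$, and a sieve-complement bound $\Pi(\mathcal{F}_m^c) \leq \exp(-(c+4) m \epsilon_m^2)$.

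Since (C1)--(C3) exactly match the hypotheses of Shen--Tokdar--Ghosal for the unmodified DPMN prior, the strategy is to reuse their three verifications and only track the modifications induced by (a) the subset sample size $m = n/J$ in place of $n$, (b) replacement of $\mbox{DP}(M, G)$ by $\mbox{DP}(M/J, G)$, and (c) replacement of $\Pi_\sigma$ by the un-normalized $1/J$-power $\Pi_\sigma^{1/J}$. Under (C4) we have $J \asymp \log m$, so each modification perturbs only constants and exponents of $\log m$ rather than the polynomial rate $m^{-\beta/(2\beta+1)}$. The sieve construction and entropy calculation in Shen--Tokdar--Ghosal depend on the priors only through the support and tail behavior of $G$ and $\Pi_\sigma$, so (C1)--(C2) carry through up to rescaling of constants, while the sieve-complement estimate follows by re-running their tail bounds with the $1/J$-power factors absorbed into the exponent.

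The main obstacle is the KL prior-mass lower bound, since the DP concentration enters multiplicatively. In the Shen--Tokdar--Ghosal construction one places $N \asymp (\log m)^{1/\beta}$ atoms in prescribed balls with prescribed Dirichlet weights; under $\text{Dir}(G(A_1)M/J, \ldots, G(A_N)M/J)$ the associated small-ball probability acquires additional factors controlled by $M/J$ and $\Gamma$-ratios, contributing at most an extra $N \log J \asymp (\log m)^{1 + 1/\beta}$ term on the logarithmic scale. The small-ball probability of $\Pi_\sigma^{1/J}$ near the optimal bandwidth $\sigma_m \asymp (\log m)^{-1/\beta}$ is, by (C2), bounded below by the $1/J$ power of the original bound, so the critical exponent $C_3 \sigma_m^{-1}$ is divided by $J$, which is a gain rather than a loss. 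Tallying all adjustments shows that the prior-mass bound still holds at rate $\epsilon_m = m^{-\beta/(2\beta+1)}(\log m)^t$ for any $t > 3/2 + (1/\tau)/(2 + 1/\beta)$; combined with the entropy and sieve-complement bounds this yields $\Pi_m(\{h(f, f_0) \geq \epsilon_m\} \mid \sbf{X}_j) \to 0$ in $P_{f_0}^m$-probability, and the Hellinger-to-Wasserstein inequality above then gives $W_2(\Pi_m(\cdot \mid \sbf{X}_j), \delta_{f_0}) \lesssim \epsilon_m$ as claimed.
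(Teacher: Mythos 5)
Your overall route matches the paper's: show Hellinger posterior contraction for each subset posterior via the Ghosal--Ghosh--van\,der\,Vaart three-condition framework applied to the Shen--Tokdar--Ghosal sieve, tracking the effect of replacing $\mbox{DP}(M,G)\times\Pi_\sigma$ by $\mbox{DP}(M/J,G)\times\Pi_\sigma^{1/J}$ and $n$ by $m=n/J$, then pass to Wasserstein distance via $W_2^2(\Pi_m,\delta_{f_0})\le\epsilon_m^2+2\Pi_m(\{h(f,f_0)\ge\epsilon_m\}\mid\sbf{X}_j)$. You also correctly identify that under (C4) all modifications cost only factors of $\log m$.

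Two things are missing, one cosmetic and one substantive. Cosmetically, the paper carries two rates, $\epsilon_m = m^{-\beta/(2\beta+1)}(\log m)^t$ for the entropy and sieve-complement bounds and a smaller $\wt\epsilon_m = m^{-\beta/(2\beta+1)}(\log m)^{t_0}$ (with $t-1/2>t_0$) for the prior-thickness bound; the half-power gap is precisely what absorbs the extra $J\asymp\log m$ that weakens the sieve-complement estimates (e.g.\ $\Pi_\sigma^{1/J}(\sigma^{-2}>\sigma_m^{-2})\lesssim\exp(-C_2 m/J)$ needs $m\epsilon_m^2/J\gtrsim m\wt\epsilon_m^2$, hence $\epsilon_m^2\gtrsim\wt\epsilon_m^2\log m$). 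Your single-rate phrasing obscures where the $1/2$ in the exponent condition $t>3/2+(1/\tau)/(2+1/\beta)$ comes from.

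More substantively, the prior-thickness step is not merely a matter of tracking $\Gamma$-ratio factors. With $\mbox{DP}(M/J,G)$ the induced finite-dimensional Dirichlet weights have parameters $G(A_j)M/J$ whose total mass $\sum_j G(A_j)M/J = M/J$ tends to zero as $J$ grows. The original small-ball lower bound for Dirichlet weights (Lemma 10 of Ghosal and van der Vaart, 2007) is stated with the total parameter mass bounded away from zero, so it does not literally apply here; the paper proves an extension (Lemma S.1) showing that the same $\exp(-cN\log\epsilon^{-1})$ lower bound holds when the total Dirichlet mass $\to 0$. Your proposal asserts the extra cost is ``at most $N\log J$'' but gives no argument that the small-ball bound survives the degenerate-parameter regime; this is the genuinely new lemma that the proof needs. (As a minor side point, your scaling $N\asymp(\log m)^{1/\beta}$ for the number of atoms is not the one used in Shen--Tokdar--Ghosal, where $N$ grows polynomially in $m$, although this does not ultimately change the verdict since $N\log J$ is still dominated by $m\wt\epsilon_m^2$.)
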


The next lemma states how the convergence under the Wasserstein metric
for each subset posterior distribution $\Pi_m(\cdot \mid \sbf{X}_j)$
controls that of the posterior distribution $\widebar{\Pi}_n(\cdot
\mid \sbf{X}_n)$.
\begin{lemma}\label{conv_w}
  $W_2(\widebar{\Pi}_n(\cdot \mid \sbf{X}_n), \delta_{f_0}) \leq
  J^{-1} {\textstyle \sum_{j=1}^J} W_2(\Pi_m(\cdot \mid \sbf{X}_j), \delta_{f_0})$.
\end{lemma}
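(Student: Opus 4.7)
The plan is to exploit two facts already in hand: the identity \eqref{wass_dist} that reduces the Wasserstein distance to a Dirac into an $L^2$--Hellinger norm under the posterior, and the biconvexity of the Hellinger distance from Lemma \ref{conv_h}. The product structure of $\widebar{\Pi}_n(\cdot\mid\sbf{X}_n)$ recorded in \eqref{convex_conv} then lets the calculation be carried out as an expectation with respect to the independent product of subset posteriors.

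First I would write, using \eqref{wass_dist} and \eqref{convex_conv} with $L(f)=h^2(f,f_0)$,
\begin{equation*}
W_2^2(\widebar{\Pi}_n(\cdot\mid\sbf{X}_n),\delta_{f_0})
= \int_{\mathcal{P}\times\cdots\times\mathcal{P}} h^2\!\Bigl(J^{-1}{\textstyle\sum_{j=1}^J} f_j,\,f_0\Bigr)\,d\Pi_m(f_1\mid\sbf{X}_1)\cdots d\Pi_m(f_J\mid\sbf{X}_J).
\end{equation*}
Next I would invoke Lemma \ref{conv_h} (biconvexity of the Hellinger distance in its first argument) pointwise in $(f_1,\ldots,f_J)$ to obtain
\begin{equation*}
h\!\Bigl(J^{-1}{\textstyle\sum_{j=1}^J} f_j,\,f_0\Bigr) \;\leq\; J^{-1}{\textstyle\sum_{j=1}^J} h(f_j,f_0).
\end{equation*}
Substituting, taking square roots, and recognizing the integral on the right as the $L^2$ norm of the random variable $J^{-1}\sum_j h(f_j,f_0)$ under the product measure, I would apply Minkowski's inequality to separate the sum:
\begin{equation*}
W_2(\widebar{\Pi}_n(\cdot\mid\sbf{X}_n),\delta_{f_0})
\;\leq\; J^{-1}{\textstyle\sum_{j=1}^J}\Bigl\{\int h^2(f_j,f_0)\,d\Pi_m(f_j\mid\sbf{X}_j)\Bigr\}^{1/2}.
\end{equation*}
The independence across $j$ is what legitimizes treating each integrand as depending on a single coordinate after Minkowski. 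Finally, by \eqref{wass_dist} each term on the right equals $W_2(\Pi_m(\cdot\mid\sbf{X}_j),\delta_{f_0})$, giving the stated bound.

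There is no real obstacle here; the only point requiring care is the order of the two inequalities: biconvexity is applied pointwise inside the integral (hence before squaring and integrating), while Minkowski provides the $L^2$ triangle inequality that pulls the factor $1/J$ and the sum outside the square root. Applying Cauchy--Schwarz at that step would instead yield the weaker bound $\{J^{-1}\sum_j W_2^2(\Pi_m(\cdot\mid\sbf{X}_j),\delta_{f_0})\}^{1/2}$, which still suffices for the subsequent rate argument, but Minkowski gives the sharper stated form. Combined with Theorem \ref{conv_rate_sub}, this immediately yields the common contraction rate $\epsilon_m$ for $\widebar{\Pi}_n$.
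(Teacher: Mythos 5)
Your proof is correct and follows essentially the same route as the paper: reduce to the $L^2$--Hellinger representation via \eqref{wass_dist} and \eqref{convex_conv}, apply the biconvexity of $h$ pointwise, and then use the $L^2$ triangle inequality over the product of subset posteriors. The only cosmetic difference is that you invoke Minkowski's inequality by name, whereas the paper proves it in situ by expanding $\{J^{-1}\sum_j h(f_j,f_0)\}^2$ and bounding the cross terms with Cauchy--Schwarz, arriving at the identical bound.
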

\begin{proof}
First, from the expressions
\eqref{convex_conv}, \eqref{wass_dist} and Lemma \ref{conv_h}, it can be seen that 
\begin{eqnarray*}
  W_2^2(\widebar{\Pi}_n(\cdot \mid \sbf{X}_n), \delta_{f_0}) &=& \int_{\mathcal{P} \times \cdots \times \mathcal{P}}
  h^2({\textstyle \sum_{j=1}^J}
  f_j/J, f_0) d \Pi_m(f_1 \mid \sbf{X}_1) \cdots d\Pi_m(f_J \mid \sbf{X}_J)\\
  &\leq& J^{-2} {\sum_{j=1}^J} \int_{\mathcal{P}} h^2(f, f_0) d\Pi_m(f \mid \sbf{X}_j) \\
  &+& J^{-2} {\sum_{j \neq k}}
      \int_{\mathcal{P} \times \mathcal{P}}h(f_j, f_0) h(f_k, f_0)
      d\Pi_m(f_j \mid \sbf{X}_j) d\Pi_m(f_k \mid \sbf{X}_k).
\end{eqnarray*}  
By Cauchy-Schwartz inequality, all terms in the second summation on the right hand side
is bounded by $\{\int_{\mathcal{P}}h^2(f, f_0) d\Pi_m(f \mid
\sbf{X}_j)\int_{\mathcal{P}}h^2(f, f_0) d\Pi_m(f \mid
\sbf{X}_k)\}^{1/2}$. Applying expression \eqref{wass_dist} again to all
terms in the previous display yields
\begin{eqnarray*}
W_2^2(\widebar{\Pi}_n, \delta_{f_0}) &\leq& J^{-2} \bigg\{ {\sum_{j=1}^J}
  W_2^2(\Pi_m(\cdot \mid \sbf{X}_j), \delta_{f_0}) + { \sum_{j
  \neq k}}  W_2(\Pi_m(\cdot \mid \sbf{X}_j),\delta_{f_0}) W_2(\Pi_m(\cdot \mid 
  \sbf{X}_k),\delta_{f_0}) \bigg\}\\
  &=& \bigg\{J^{-1} {\sum_{j=1}^J} W_2(\Pi_m(\cdot \mid \sbf{X}_j), \delta_{f_0})\bigg\}^2.
\end{eqnarray*}
\end{proof}

\begin{thm}\label{main_thm}{\rm
If assumptions (C1)\textemdash(C4) are satisfied, the posterior
distribution $\widebar{\Pi}_n(\cdot \mid \sbf{X}_n)$ converges to $f_0(\cdot)$ in Wasserstein distance with contraction rate $\epsilon_n = n^{-\beta/(2\beta + 1)} (\log n)^{u}, u > 3/2 + (1/\tau)/(2 + 1/\beta) + \beta/(2\beta + 1)$, that is,
\begin{eqnarray*}
  W_2(\widebar{\Pi}_n(\cdot\mid\sbf{X}_n), \delta_{f_0}) \lesssim \epsilon_n.
\end{eqnarray*}
}  
\end{thm}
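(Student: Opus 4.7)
The plan is to combine Lemma \ref{conv_w} with Theorem \ref{conv_rate_sub} and then to bookkeep how the logarithmic factors transform under the relation $m \asymp n/\log n$ imposed by assumption (C4). First, Lemma \ref{conv_w} reduces the target quantity to the simple average
\begin{eqnarray*}
W_2(\widebar{\Pi}_n(\cdot\mid\sbf{X}_n), \delta_{f_0}) \leq J^{-1} \sum_{j=1}^J W_2(\Pi_m(\cdot\mid\sbf{X}_j), \delta_{f_0}).
\end{eqnarray*}
Since assumptions (C1)--(C3) continue to hold for each subset of size $m$, Theorem \ref{conv_rate_sub} controls every summand by $\epsilon_m = m^{-\beta/(2\beta+1)}(\log m)^t$ with the same constant for all $j$, because the subsamples are identically distributed. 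Averaging over $j$ preserves the same order, so $W_2(\widebar{\Pi}_n(\cdot\mid\sbf{X}_n), \delta_{f_0}) \lesssim \epsilon_m$.

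Next I would substitute the growth condition $m \asymp n/\log n$ from (C4) to express the bound in terms of $n$. Direct computation gives $m^{-\beta/(2\beta+1)} \asymp n^{-\beta/(2\beta+1)}(\log n)^{\beta/(2\beta+1)}$ and $(\log m)^t \asymp (\log n)^t$, hence
\begin{eqnarray*}
\epsilon_m \asymp n^{-\beta/(2\beta+1)} (\log n)^{t + \beta/(2\beta+1)}.
\end{eqnarray*}
Choosing $t$ just above $3/2 + (1/\tau)/(2 + 1/\beta)$, which is allowed by Theorem \ref{conv_rate_sub}, yields the exponent $u > 3/2 + (1/\tau)/(2 + 1/\beta) + \beta/(2\beta+1)$ in the statement and gives $\epsilon_m \lesssim \epsilon_n$, completing the desired bound $W_2(\widebar{\Pi}_n(\cdot\mid\sbf{X}_n), \delta_{f_0}) \lesssim \epsilon_n$.

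The main obstacle, and the only nontrivial point, is that each subset rate in Theorem \ref{conv_rate_sub} holds in $P_0^\infty$-probability rather than deterministically, while $J \asymp \log n$ diverges with $n$. A naive union bound across the $J$ subsamples would be vacuous unless each exceptional event carries probability at most $o(1/\log n)$. However, the Ghosal--van der Vaart machinery behind Theorem \ref{conv_rate_sub} produces exceptional sets whose probability decays as $\exp(-c m \epsilon_m^2)$, which is more than enough to absorb a union bound over $J = O(\log n)$ subsamples. Once this technicality is dispatched, the chain of inequalities above delivers the theorem.
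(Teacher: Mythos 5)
Your proposal is correct and follows exactly the paper's argument: combine Lemma~\ref{conv_w} with Theorem~\ref{conv_rate_sub} and use (C4) to trade $\epsilon_m$ for $\epsilon_n$, picking up the extra $(\log n)^{\beta/(2\beta+1)}$ factor along the way. Your closing remark about the need for the exceptional sets in the $J\asymp\log n$ subsamples to have probability decaying faster than $1/\log n$ (guaranteed by the exponential bounds in the Ghosal--van der Vaart machinery) fills in a technical point the paper leaves implicit, but does not change the underlying route.
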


\begin{proof}
  The desired inequality is implied by Theorem \ref{conv_rate_sub} and Lemma~\ref{conv_w}. 
\end{proof}

\begin{remark}\label{thm_mv}{\rm
In Theorem \ref{main_thm}, we state the one dimensional case for simplicity. We make a brief comment without details that a multivariate version of Theorem \ref{main_thm} exists 
provided a multivariate correspondence to prior
\eqref{DPmixmvn_pprior} is imposed. In the standard Bayesian density
estimation setting, the contraction rate for estimating a multivariate
density has been studied previously in \cite{shen2013adaptive}. When
switched to the multivariate case with dimension $d$, the correspondence to 
$\Pi_\sigma$ would be a prior on the covariance matrix $\Sigma \in
\mathbb{R}^{d \times d}$, on which
\cite{shen2013adaptive} imposed conditions regarding the
concentration of eigenvalues of the matrix. Meanwhile, the base
measure $G(\cdot)$ on $\mathbb{R}^d$ is required to satisfy a straightforward
multi-dimensional extension of (C1). 
Under these conditions, the logic flow in deriving Theorem~\ref{main_thm} guarantees that our divide and conquer density estimator can achieve the multivariate contraction rate as if the complete data has been used. Details will be omitted.
 }
\end{remark}

\section{Simulation}\label{sec4}

\subsection{Overview}\label{sec5.1}

As briefed in Section \ref{fmnp}--\ref{concept_in_thm}, we come up
with a proper way to rebuild a simple density estimator for some existing 
parametric and nonparametric methods in the divide and
conquer context. 

The finite mixtures of normals is one example considered in \cite{srivastava2015scalable} regarding model \eqref{DDmixmvn} and prior \eqref{DDmixmvn_prior}. For comparison, our first simulation will imitate Section 4.2 of \cite{srivastava2015scalable} in the choices of $p = 2$, $K = 2$, $\alpha_1 = \alpha_2 = 1/2$, $l = 100$, $\nu = 2$ and $\sbf{S} = 4 \sbf{I}_2$. The true parameter values are set as $\sbf{\pi} = (0.3, 0.7)$, $\sbf{\mu}_1 = (1, 2)^T$, $\sbf{\mu}_2 = (7, 8)^T$, $\sbf{\Sigma}_1 = \sbf{\Sigma}_2 = \{\Sigma_{ij}\}_{1 \leq i,j \leq 2}$, where $\Sigma_{12} = 0.5$, $\Sigma_{11} = 1$ and $\Sigma_{22} = 2$.  As mentioned in Section \ref{fmnp}, we propose the following adjusted prior on each machine:
\begin{eqnarray}
  \label{ad_DDmixmvn}
  \sbf{\pi} \sim \text{Dir}(1/(2J), 1/(2J)), \quad 
  \sbf{\mu}_k \mid \sbf{\Sigma}_k \sim \mbox{N}_2(\sbf{0}, 100 \sbf{\Sigma}_k J), \quad   
  \sbf{\Sigma}_k \sim \mbox{IW}(5/J-3, (4/J) \sbf{I}_2).
\end{eqnarray}

\cite{srivastava2015scalable} mainly illustrated the ability of their method, WASP, on estimating some nonlinear functions of the parameters, say $g(\sbf{\pi}, \sbf{\mu}_1, \sbf{\mu}_2, \sbf{\Sigma}_1, \sbf{\Sigma}_2)$. They compared the accuracy of WASP with other methods, among which the superiors were Consensus Monte Carlo and WASP. Given their close competition in estimating the functions of the parameters, it is worthwhile to examine whether the performance is consistent in estimating the original model parameters. With the acknowledgment of the possible similarity of our method and Consensus Monte Carlo in finite dimensional models, we implement our algorithm using (\ref{ad_DDmixmvn}).  A disclaimer is that we find there are differences with the implementation of CMC in \cite{srivastava2015scalable} (specifically in updating $\sbf{\pi}$ and $\Sigma_k$). WASP estimator is obtained using online code of \cite{srivastava2015scalable}.

The second simulation we conduct is in a more complicated context, as designed for density deconvolution with shape constraints. We generate observed data $W_i$ through a classical measurement error model, specifically, an independent sample of $W = X + U$, where the distribution of $U$ is known and possibly heteroscedastic. The density of the true variable $X$, $f(\cdot)$, is of interest and in some application context (see Section \ref{sec6} for one such application) should have a symmetric and unimodal shape.  To ensure the shape of $f(\cdot)$, a multi-layer mixture prior is adopted. According to \cite{feller1971introduction}, any unimodal and symmetric density (with finite first derivative) $f(\cdot)$ can be represented by a mixture of uniform distributions. We focus on the scenario that the mixing distribution has a density $g(\cdot)$. The density $g(\cdot)$ is then built upon a Dirichlet process mixture of gamma distributions. Using latent variables, we can write out the hierarchical model
\begin{eqnarray} \label{ddsc}
  \notag&W_i \mid X_i \sim \mbox{N}(X_i, \sigma_i^2);  \quad X_i \mid \theta_i \sim \text{Unif}(-\theta_i, \theta_i);  \quad \theta_i \mid z, \mu \sim \text{Ga}(z, z/\mu); \\
  &\mu \mid P_\mu \sim P_\mu; \quad P_\mu \mid m, D \sim \mbox{DP}(m,D); \quad z \sim \Pi_z(\cdot),
\end{eqnarray}  
where the gamma distributions are reparameterized by shape $z$ and
mean $\mu$. The mixing is imposed on $\mu$ leading to a Dirichlet
process location mixture of gamma distributions. The proposed prior 
works well in simulations and real data in a recent work under review
in a peer-reviewed journal. 

The second simulation setup involves a hierarchical prior through the introduced latent variables. We propose to
impose the fraction $1/J$ on the bottom layer of the prior, namely, 
\begin{eqnarray}
  \label{ddsc_dc}
  \notag&W_i \mid X_i \sim \mbox{N}(X_i, \sigma_i^2);  \quad X_i \mid \theta_i \sim \text{Unif}(-\theta_i, \theta_i);  \quad \theta_i \mid z, \mu \sim \text{Ga}(z, z/\mu); \\
  &\mu \mid P_\mu \sim P_\mu; \quad P_\mu \mid m, D \sim \mbox{DP}(m/J,D); \quad z \sim \Pi_z^{1/J}(\cdot),  
\end{eqnarray}
To the best of our knowledge, hierarchical priors have not
been investigated in the divide and conquer context. Thus it highlights the
capability of applying our method in a broad range. A Gibbs sampler
corresponding to \eqref{ddsc_dc} is provided in Section S.1.3
in the Supplementary material.

The true density is composed of a normal component, $\mbox{N}(0,0.2^2)$,
which targets a sharp peak at zero and a t distribution with
degrees of freedom $5$ which generates the large values of $X$. The
two components are assigned with probabilities $0.8$ and $0.2$ respectively
so that the resulting density has a sharp peak around zero and a
small portion on the large values. The choice of error variances is $\sigma_i^2 = (0.75 + X_i/4)^2$ under which the variance of error depends on $X$ and
the {\it expected} variance of error is more  
than the variance of $X$. 

We adopt posterior mean estimators unless specified otherwise. For WASP, we originally get an overall posterior distribution from the author's code, from which posterior samples are generated, this in turn leads to the posterior mean estimator. Our estimator is obtained first by taking the average of posterior samples of densities across the MCMC steps on each individual machine. Then our divide and conquer estimator is calculated by averaging over the estimators across the selected machines. In addition, an estimator based on the original analysis using the complete data is implemented for validating the above estimators.

\subsection{Simulation Results}\label{sec5.2}

For the first simulation, the sample size of the complete
data is $n = 10,000$ and the number of MCMC steps is $10,000$ with
burn-in steps $5,000$ and thinning every $5$th iterations. In
addition, $J = 10$ machines are chosen for splitting the data. The
simulation is repeated $10$ times. All these setups agree with 
\cite{srivastava2015scalable} for comparison with the WASP estimator. 

Table \ref{tab1} summarizes and compares the accuracy of parameter estimation for $\sbf{\mu}_1$ and $\sbf{\mu}_2$. Table \ref{tab2} contains the counterparts for $\sbf{\Sigma}_1$ and $\sbf{\Sigma}_2$.
For the covariance matrices estimation, we are limited to present the performance of our method and the method that uses the complete data. The WASP estimators for $\sbf{\Sigma}_1$ and $\sbf{\Sigma}_2$ are missing due to incapability of obtaining posterior samples from the online codes provided in \cite{srivastava2015scalable}.

\begin{table}
\centering
\begin{tabular}{ccccccc}
\hline\hline
  Parameter &\multicolumn{3}{c}{$\sbf{\mu}_1$}
  &\multicolumn{3}{c}{$\sbf{\mu}_2$}\\
Estimator & full &WASP &fPrior    & full &WASP &fPrior \\
\hline
bias ($\times 10^{-3}$)  & (-2, -2) & (-2, -2) & (-2, -2) & (1, 3) 
& (0, 3) & (1, 3)  \\
se ($\times 10^{-3}$)  & (4.7, 4.5) & (4.3, 4.6) & (4.7, 4.5) 
& (5.7, 7.5) & (5.8, 7.6) & (5.7, 7.5) \\
\hline\hline
\end{tabular}
\caption{The bias and standard error in estimating $\sbf{\mu}_1$ and $\sbf{\mu}_2$
  using complete data (full), WASP and our method (fPrior). The
  reported values are magnified by a factor $10^3$ so that less
  decimals are displayed.}
\label{tab1}
\end{table}

\begin{table}
\centering
\begin{tabular}{ccccccc}
\hline\hline
  Parameter &\multicolumn{2}{c}{$\Sigma_{11}$}
  &\multicolumn{2}{c}{$\Sigma_{12}$}  &\multicolumn{2}{c}{$\Sigma_{22}$}\\
Estimator & full &fPrior & full &fPrior & full &fPrior \\
\hline
bias ($\times 10^{-3}$)  & (2, -2) & (2, 0) & (2, -4) & (2, -3) 
& (6, 0) & (8, 3)  \\
se ($\times 10^{-3}$) & (4.2, 13.3) & (5.1, 7.0) & (3.5, 10.7) 
& (4.7, 9.5) & (13.3, 7.6) & (10.6, 11.8)  \\
\hline\hline
\end{tabular}
\caption{The bias and standard error in estimating components of
  $\sbf{\Sigma}_1$, $\sbf{\Sigma}_2$ ($\sbf{\Sigma}_1 = \sbf{\Sigma}_2$)
  using complete data (full) and our method (fPrior). We report two values for each parameter
  corresponding to estimators from the two components. The
  reported values are magnified by a factor $10^3$ so that less
  decimals are displayed}
\label{tab2}
\end{table}

The second simulation is performed on a data set of size $15,000$ and repeated $50$ times. We use $5,000$ MCMC iterations with $1,000$ burn-ins. The density estimators are constructed in three contexts: the complete data with the original prior, divided data on $20$ machines with the original prior, divided data on $20$ machines with the proposed prior. For the latter two, the estimated density is obtained by further averaging the posterior densities over the $20$ machines. Figure \ref{Fig1} presents the estimated densities and the true density.

It is worthwhile to mention that the accuracy of our proposed prior is achieved with a significant computational gain. The algorithm runs on multiple computer nodes in a Linux OS cluster with single core assigned for individual analysis. In addition, less memory is assigned for analysis of split data than that used for analyzing the complete sample. Roughly, the overall time is about $1/5$ over that without data splitting.


 \begin{figure}[htbp]
  \centering
  \includegraphics[height=0.5\textwidth,
    width=0.8\textwidth, trim=0.5cm 1cm 1cm 2cm,
    clip=TRUE]{./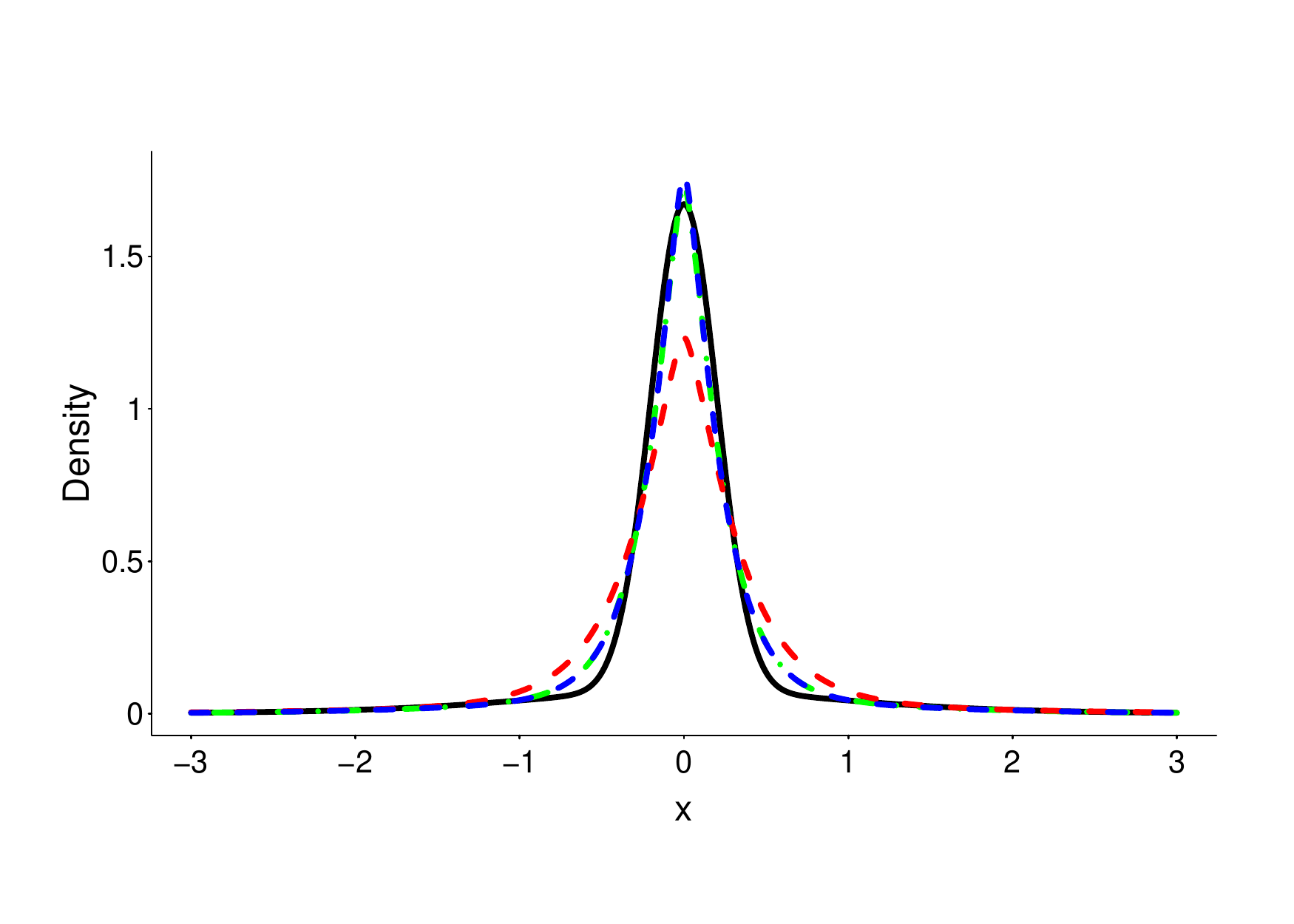}
    \caption{Deconvoluted density estimators using the complete data of size $15,000$ and using the divided data (of equal size) on $20$ machines. The truth (black solid line) is a mixture of $t$ with degrees of freedom $5$ with probability $0.2$ and $\mbox{N}(0,0.2^2)$ with probability $0.8$. The posterior mean of MCMC samples of the density using the complete data and the original prior (green dotted dash line) can be treated as the ``best" estimator. The averaged estimator using the original prior over $20$ machines (red dashed line) and our estimator (blue dashed line) are displayed for comparison.}
  \label{Fig1}
\end{figure}

\subsection{Conclusions}\label{sec5.3}
We implemented two distinct simulations in density estimation using finite mixture of normal distributions and density deconvolution with shape constraints. In the first simulation, the performance of our methods in estimating the mean parameter of normal distributions is very competitive with the established WASP while both of them are indifferentiable with the original analysis acting on the complete data.  The second simulation showcases the capability of our method in a more complicated problem that involves a hierarchical nonparametric prior. Our estimator has a clear advantage in accuracy over a naive estimator that imposes the prior in the original analysis on split data.

\section{A GWAS data set}\label{sec6}

The algorithm \eqref{ddsc} was designed to analyze data from GWAS studies. We focus on one particular study data, GIANT Height. Here we provide a concise introduction that paves the way to apply the method, more thorough information can be found in \cite{allen2010hundreds}. The trait variable height is collected for $N = 133,653$ individuals of recent European ancestry. After an initial screening, the number of single-nucleotide polymorphisms (SNPs) that are of interest is reduced to $n = 941,389$, of whom the regression coefficients $W_i$ and their associated standard error $\sigma_i$, $i = 1, \ldots, n$, in accordance to a simple linear regression are available.

Upon simple derivation the observed effect size $W_i$ is related to the true effect size $X_i$ through a measurement error model $W_i = X_i + U_i$, where $U_i \sim \mbox{N}(0,\sigma_{i\epsilon}^2/N)$ and $\sigma_{i\epsilon}^2$ equals the variance of regression error in the linear regression of height on the $i$th SNP. Because of the large number of individuals in the study, the variance of $U_i$ is well estimated by the standard error of the regression coefficient $W_i$ and thus is treated as known. By assuming all the true effect sizes $X_i$'s come from one distribution with $f(\cdot)$ as its density function, and acknowledging the fact that the observed effect sizes $W_i$ are symmetric with a majority near zero, it is reasonable to infer that $f(\cdot)$ is unimodal and symmetric at zero.  The natural question is how to estimate $f(\cdot)$.  It is apparrent that applying an efficient algorithm matters when one realizes that the number of SNPs is so large, which is typical for a GWAS data set.

The problem introduced has the same setup as our second simulation. We are interested in applying the divide and conquer algorithm \eqref{ddsc_dc} in this paper and compare the density estimators with those obtained by blindly using the original algorithm \eqref{ddsc} under the identical splitting of data. The same strategy as used in the simulation section is adopted for combining the individual density estimators. These two estimators are referred to as fPrior and naive correspondingly. We select $J = 50$ or $J = 200$ machines with each assigned effect sizes of around $20,000$ SNPs in the former case or around $5,000$ in the latter. As usual, a posterior mean estimator is chosen as the density estimator on individual machine.

Figure \ref{Fig2} displays the estimated densities for the true effect size $X$ under fPrior or naive. We display the peak and tail areas of these estimators separately, which are determined by a pre-specified cutoff $0.003$. Table \ref{tab3} summarizes the integrated absolute value of the difference (IAD) between the estimated densities under $J = 50$ and $J = 200$ for each method. IADs are also calculated separately for the peak region and the tail region.

One conclusion from Figure \ref{Fig2} and Table \ref{tab3} is that estimating the effect size density using the proposed prior is less sensitive to the total machines being used than that using the naive prior and thus we believe the proposed prior is advantageous, especially in estimating the tails.  We also observe that both priors are not very consistent in estimating the density around zero when different number of machines are selected reflecting the intrinsic difficulty in estimating the extremely small effect sizes. In conclusion, our method leads to a feasible solution in practice, especially when the focus is on detecting the larger effect sizes.
 
\begin{table}
\centering
\begin{tabular}{lccccccccccc}
\hline\hline
  Density Area &&&\multicolumn{3}{c}{$|x| > 0.003$}
  &&&\multicolumn{3}{c}{$|x| < 0.003$}\\
\hline
Estimator &&& fPrior & &naive &&& fPrior & &naive\\*[-.60em]
IAD ($50 - 200$)  &&& 0.003 & & 0.015 &&& 0.033 & &0.035  \\
\hline\hline
\end{tabular}
\caption{The integrated absolute value between the effect size density estimators under
  the choices of $J=50$ and $J = 200$, IAD ($50 - 200$), is
  compared separately for our method (fPrior) and for the naive prior
  (naive). The metric is calculated for two regions: the larger
  effect sizes ($|x| > 0.003$) and the extremely
small effect sizes ($|x| < 0.003$)}
\label{tab3}
\end{table}


\begin{figure}[htbp]
  \centering
  \includegraphics[height=0.45\linewidth, width=0.95\linewidth,
  trim=0.5cm 0.5cm 0.5cm 0.5cm, clip=TRUE]{./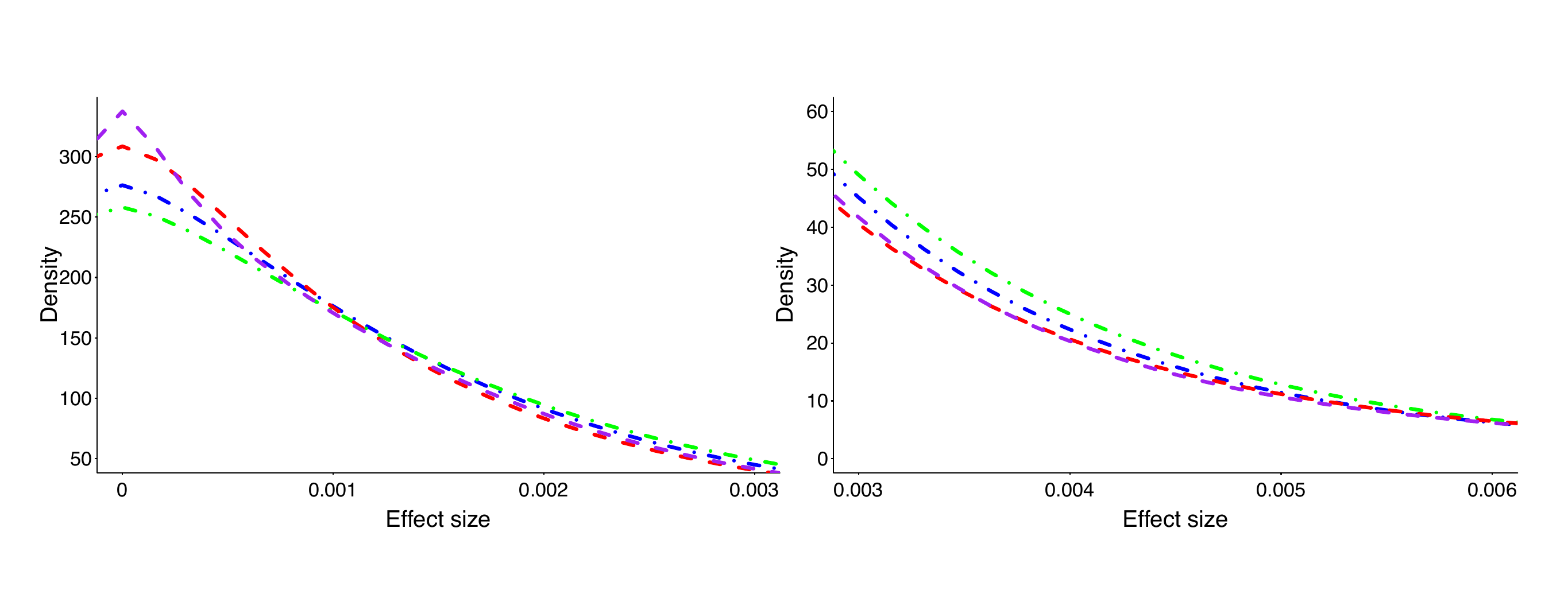}
  \caption{True effect size density estimators for
    the GIANT Height data set containing the observed effect sizes
    (associated with height) corresponding to $941,389$ SNPs separately. The peak
    region and tail region (on the positive values) are displayed in the left
    and right panel. Four density estimators are displayed: 
    fPrior using divided data (of equal size) on $50$
    machines (red dashed line), fPrior using divided data
    on $200$ machines (purple dashed line), the naive prior (original
    prior) using divided data on $50$ machines (blue dotted dashed
    line), the naive prior using divided data on $200$ machines (green dotted dashed
    line).}
\label{Fig2}
\end{figure}


\section{Discussions}\label{sec7}
We study a scalable Bayesian parametric or nonparametric 
density estimation method using a divide and conquer strategy, with a guaranteed optimal posterior
convergence. In addition, our numerical and real data results show the applicability of 
the method to a density deconvolution problem. There is an interest to
see how the idea can be used in an even broader context.

Our theoretical results indicate it is expected that the number of
machines can not be chosen to be too large compared to the total
sample size. In practice, how to select the total number of subsets is an open
problem.  

We are aware of a very nice theoretical result
\citep{szabo2017asymptotic} that investigates the optimal posterior
convergence regardless of the choice for the number of subsets.  They
consider the signal-in-noise model with a conjugate normal prior
which is regulated by the ``decay" rate of the true signal. The
authors cast a doubt about the existence of an adaptive version of
the method. One conjecture based on our theoretical results
and others is that a nonparametric prior is
crucial in possibly obtaining an adaptive convergence rate, together with a
control over the growth rate of the number of subsets.

\section*{Supplementary Material}

The online supplementary material includes detailed
algorithms for the selected examples that we used in the main paper,
the additional lemmas that slightly modifies the existing ones in the
literature for completeness and clearness.

\baselineskip=14pt

\section*{Acknowledgments}
Su is supported by a startup fund from College of Arts and Sciences,
University of Kentucky. The Authors are grateful to Yan Zhang of
Johns Hopkins University for directions to the online GWAS data set.

\clearpage\pagebreak\newpage
\newcommand{\Appendix}{\appendix\def\thesection{Appendix~\Alph{section}}\def\thesubsection{\Alph{section}.\arabic{subsection}}}
\section*{Appendix}
\begin{appendix}
\Appendix\renewcommand{\theequation}{A.\arabic{equation}}
\renewcommand{\thesubsection}{A.\arabic{subsection}}
\renewcommand{\thecondition}{A.\arabic{condition}}
\renewcommand{\theremark}{A.\arabic{remark}}
\renewcommand{\theproposition}{A.\arabic{proposition}}
\renewcommand{\thethm}{A.\arabic{thm}}
\renewcommand{\thelemma}{A.\arabic{lemma}}
\setcounter{equation}{0}
\setcounter{lemma}{0}
\baselineskip=18pt
\section*{Appendix}
\subsection{Proof of Theorem} \label{sec:app:1}

\begin{proof}
This is a proof of Theorem \ref{conv_rate_sub}. We will show that for
the subset posterior distributions $\Pi_m(\cdot  \mid  \sbf{X}_j)$ the same
contraction rate is achieved as if the original prior has been assigned on each machine. 

It is easy to argue that all $\Pi_m(\cdot  \mid  \sbf{X}_j)$ share the same asymptotic results,
it is sufficient to do for a single one. With a slight abuse of
notation, $\Pi_m(\cdot  \mid  \sbf{X}_m)$ denotes the posterior distribution on
$f(\cdot)$ in accordance to the adjusted prior, where
$\sbf{X}_m = (X_1, \ldots, X_m)$ corresponds to a sample of size $m$. Thus, 
\begin{eqnarray*}
  \Pi_m(\cdot  \mid  \sbf{X}_m) = \frac{\prod_{i=1}^m f(X_i) d\Pi_J(f)}{\int \prod_{i=1}^m f(X_i) d\Pi_J(f)}.
\end{eqnarray*}
Here $\Pi_J(f)$ is the prior given by \eqref{DPmixmvn_pprior}. 

We can follow the procedure in \cite{shen2013adaptive} which extends \cite{ghosal2007posterior} to derive the contraction rate for $\Pi_m(\cdot  \mid  \sbf{X}_j)$. The former leads to an adaptive rate assuming the true function is in a locally H\"{o}lder class. The difference lies
in the fraction prior that is used in this paper. The proof is built upon that of Theorem 1 in \cite{shen2013adaptive}. Here we aim to organize the outline of the proof and discuss the differences whenever necessary.  

Given the assumptions, a sieve space corresponding to $\epsilon_m$ is
constructed as $\mathscr{F}_m$. The procedure requires three major
steps:

\hspace{-0.5cm} 1. $\log \mathcal{N}(\epsilon_m, \mathscr{F}_m^c, h) \leq c_1 m \epsilon_m^2$ when $m
  \rightarrow \infty$, 

\hspace{-0.5cm} 2. $\Pi_J(\mathscr{F}_{m}^c) \leq c_3 \exp[{- (c_2 + 4) m \wt{\epsilon}_m^2}]$, 

\hspace{-0.5cm} 3. $\Pi_J(f: P_0\log(f/f_0) \leq \wt{\epsilon}_m^2, P_0\log^2(f/f_0) \leq
  \wt{\epsilon}_m^2) \geq c_4 \exp(- c_2 m \wt{\epsilon}_m^2)$.

We can show that all three steps above hold for $\mathscr{F}_m = \{f: f =
\int \phi_\sigma(x - z) P(dz) \mbox{ with } P = \sum_{h=1}^\infty
\pi_h \delta_{z_h}, z_h \in [-a_m, a_m]\, \mbox{ if } h \leq H_m;
\sum_{h > H_m} \pi_h < \epsilon_m; \sigma^2_m \leq \sigma^2 <
\sigma^2_m (1 + \epsilon_m^2)^{M_m}\}$, where $a_m^{a_1} = \sigma_m^{-2a_2} = M_m = m$, $H_m = \lfloor m \epsilon_m^2/(\log m)\rfloor$, $\epsilon_m = m^{-\beta/(2\beta + 1)} (\log m)^t$ and $\wt{\epsilon}_m = m^{-\beta/(2\beta + 1)} (\log m)^{t_0}$, $t - 1/2 > t_0 = 1 + (1/\tau)/(2 + 1/\beta)$. The detailed proof follows the proof of Theorem 1 in
\cite{shen2013adaptive} where we see that the effect of the
proposed prior leads to a contraction rate $\epsilon_m$ that is only a
factor $(\log m)^{1/2}$ larger than if using the original prior.

Since the same sieve space as \cite{shen2013adaptive} is adopted and a
larger $\epsilon_m$ (a factor of $\log m$), the first step on the entropy of the sieve space remains the same.

The second step will be discussed with more details. From the definition of sieve space $\mathscr{F}_m$, it can be easily argued that under prior \eqref{DPmixmvn_pprior}, $\mbox{DP}(M/J, G) \times \Pi_\sigma^{1/J}$, 
\begin{eqnarray}\label{prior_mass}
  (\mbox{DP}(M/J, G) \times \Pi_\sigma^{1/J}) (\mathscr{F}_m^c) &\leq& H_m G([-a_m, a_m]^c) + Pr({\textstyle \sum_{h > H_m}} \pi_h > \epsilon_m) \notag\\
  &+& \Pi_\sigma^{1/J}(\sigma^{-2} > \sigma_m^{-2}) + \Pi_\sigma^{1/J}(\sigma^{-2} \leq \sigma_m^{-2}(1 + \epsilon_m^2)^{-M_m}).
\end{eqnarray}

According to the assumptions on $G$, $\Pi_\sigma$ and $\mathscr{F}_m$, it can be shown that 
\begin{eqnarray*}
  H_m G([-a_m, a_m]^c) &\leq& b_1 m \epsilon_m^2(\log m)^{-1} \exp(-C_1 m),\\
  \Pi_\sigma^{1/J}(\sigma^{-2} > \sigma_m^{-2}) &\leq& b_2^{1/J} \exp(-C_2 m/J), \\
  \Pi_\sigma^{1/J}(\sigma^{-2} \leq \sigma_m^{-2}(1 + \epsilon_m^2)^{-M_m}) &\leq& b_3^{1/J} m^{a_3/(a_2 J)} (1 + \epsilon_m^2)^{-m/J} \\
  &\asymp& \exp(-C_3 m \epsilon_m^2/J),
\end{eqnarray*}
in addition, the second term in \eqref{prior_mass}, the upper bound
for $Pr({\textstyle  \sum_{h > H_m}} \pi_h > \epsilon_m )$, can be
found using stick breaking representation for $\{\pi_h, h \geq 1\}$,
that is, $\pi_h = V_h \prod_{i < h}
(1 - V_i)$, $\{V_h , h \geq 1\}$ are independent beta-distributed
random variables with parameter $1$ and $M/J$. Then we can show 
\begin{align*}
Pr({\textstyle  \sum_{h > H_m}} \pi_h > \epsilon_m) &= Pr\{{\textstyle
  \prod_{i=1}^{H_m}} (1 - V_i) > \epsilon_m\} = Pr\{-{\textstyle
  \sum_{i = 1}^{H_m}} \log(1 - V_i) < \log(1/\epsilon_m)\} \\
 &\leq \frac{\{-(M/J) \log \epsilon_m\}^{H_m}}{\Gamma(H_m + 1)} \leq
        \bigg(\frac{e M}{H_m J} \log \frac{1}{\epsilon_m}\bigg)^{H_m}
        \asymp \exp\{- C_4 m \epsilon_m^2 \log m\}.
\end{align*}
The last two inequalities follow from $-\sum_{i = 1}^{H_m} \log(1 -
V_i)$ is a Gamma random variable with parameter $H_m$ and $M/J$ and
Stirling's formula.

These upper bounds together with \eqref{prior_mass} yield
$(\mbox{DP}(M/J, G) \times \Pi_\sigma^{1/J}) (\mathscr{F}_m^c)
\lesssim \exp(-C m \epsilon_m^2/J)$ for some constant $C$. Since $m =
n/J, J \asymp \log n$ and $\epsilon_m^2 > \wt{\epsilon}_m^2 \log m$, we conclude that $(\mbox{DP}(M/J, G) \times \Pi_\sigma^{1/J}) (\mathscr{F}_m^c) \leq c_3 \exp\{-(c_2 + 4)m \wt{\epsilon}_m^2\}$ for some constant $c_3$ and any constant $c_2$, which will be chosen as the constant in step three.  

The third step is termed ``Prior thickness result" in
\cite{shen2013adaptive} and the lower bound therein is built for prior
\eqref{DPmixmvn}, $\mbox{DP}(M, G) \times \Pi_\sigma$, on a prior set
$\mathcal{P}_{\sigma_m} \times \mathcal{S}_{\sigma_m}$ (in the one dimensional
case $\mathcal{S}_{\sigma_m} = \{\sigma: \sigma^{-2} \in [\sigma_m^{-2},
\sigma_m^{-2}(1 + \sigma_m^{2\beta})]\}$) with
$\sigma_m^\beta = \wt{\epsilon}_m \{\log^{-1} (1/\wt{\epsilon}_m)\}$,   
in recognition of Lemma 10 in \cite{ghosal2007posterior} and the condition of
$\Pi_\sigma$. Thus it can be easily verified the lower bound holds for
the same $\wt{\epsilon}_m$ under prior \eqref{DPmixmvn_pprior},
$\mbox{DP}(M/J, G) \times \Pi_\sigma^{1/J}$ on the prior set
$\mathcal{P}_{\sigma_m} \times \mathcal{S}_{\sigma_m}$, as long as the same lower
bound of Lemma 10 and that of $\Pi_\sigma$ can be achieved. It is
easy to show that $\Pi_\sigma^{1/J}(\mathcal{S}_{\sigma_m}) \geq C_4
\exp[-c_4 \wt{\epsilon}^{-1/\beta}\{\log(1/\wt{\epsilon}_m)\}^{2 +
  1/\tau + 1/\beta}]$ under Condition (C2) and (C4).
We can also show that the same lower bound of Lemma 10 in
\cite{ghosal2007posterior} holds with
a slight modification on the condition of Dirichlet parameters. For
readers' interest it is stated as Lemma S.1
and provided in the Supplementary material.
 
\end{proof}

\bibliographystyle{biomAbhra}
\bibliography{Bayesian_divide_conquer}

\end{appendix}

\clearpage\pagebreak\newpage
\pagestyle{fancy}
\fancyhf{}
\rhead{\bfseries\thepage}
\lhead{\bfseries NOT FOR PUBLICATION SUPPLEMENTARY MATERIAL}
\begin{center}
{\LARGE{\bf Supplementary Material to\\ {\it Divide and Conquer algorithm of Bayesian Density Estimation}}}
\end{center}

\baselineskip=12pt

\vskip 2mm
\begin{center}
Ya Su \\
Department of Statistics, University of Kentucky, Lexington, KY 40536-0082, U.S.A., ya.su@uky.edu\\
\end{center}

\setcounter{figure}{0}
\setcounter{equation}{0}
\setcounter{page}{1}
\setcounter{table}{0}
\setcounter{section}{0}
\renewcommand{\thefigure}{S.\arabic{figure}}
\renewcommand{\theequation}{S.\arabic{equation}}
\renewcommand{\thesection}{S.\arabic{section}}
\renewcommand{\thesubsection}{S.\arabic{section}.\arabic{subsection}}
\renewcommand{\thepage}{S.\arabic{page}}
\renewcommand{\thetable}{S.\arabic{table}}
\renewcommand{\thelemma}{S.\arabic{lemma}}

\section{Algorithms for example models}\label{sec.S1}

The algorithms for three example models in the main body of the paper are displayed. For ease of notation, let $\mathbf{\Omega}_{-\zeta}$ be all variables in $\mathbf{\Omega}$ but excluding $\zeta$. 

\subsection{Finite mixtures of Normal prior}\label{sec.S1.1}

The Gibbs sampling algorithm is given below. 

Denote $X_{ji}$ the $i$th sample distributed to subset $j$, $Z_{ji}$
the component indicator variable where $X_{ji}$ pertains to
that is, $X_{ji}|Z_{ji} = k \sim \mbox{N}_p(\sbf{\mu}_k, \sbf{\Sigma}_k)$, $n_{jk} = \sum_{i=1}^m I_{(Z_{ji} = k)}$, $\widebar{X}_{jk} =
n_{jk}^{-1} \sum_{i=1}^m X_{ji} I_{(Z_{ji} = k)}$  and $V_{jk} =
\sum_{Z_{ji} = k}(X_{ji} - \widebar{X}_{jk})(X_{ji} - \widebar{X}_{jk})^T$ corresponding to
the number of samples, sample mean and scaled sample covariance matrix
belonging to subset $j$ and component $k$ for $j = 1, \ldots, J$, $k =
1,\ldots, K$. Under the new prior for each subset, the conditional posterior
distributions of the variables are  
\begin{eqnarray*}
  P(Z_{ji} = k|\sbf{\Omega}_{-Z_i}) &\propto& \pi_k \mbox{N}_p(X_{ji}; \sbf{\mu}_k,
                                        \sbf{\Sigma}_k) \\ 
  \sbf{\Sigma}_k | \sbf{\Omega}_{-\{\sbf{\Sigma}_k,\sbf{\mu}_k\}} &\sim& \mbox{IW}\left(n_{jk} +
                                               \frac{\nu + 1}{J} - \frac{(p+1)(J-1)}{J}, V_{jk}
                                               + \frac{(l J)^{-1} n_{jk}}{(l J)^{-1} +
                                 n_{jk}} \widebar{X}_{jk}
                                               \widebar{X}_{jk}^T + \frac{1}{J} \sbf{S} \right)\\
  \sbf{\mu}_k | \sbf{\Omega}_{-\sbf{\mu}_k} &\sim&  \mbox{N}_p\left(\frac{n_{jk}}{(l J)^{-1} +
                                 n_{jk}} \widebar{X}_{jk}, \frac{1}{(l J)^{-1} +
                                 n_{jk}} \sbf{\Sigma}_k\right)    \\
  \sbf{\pi} | \sbf{\Omega}_{-\sbf{\pi}} &\sim&  \text{Dir}(n_{j1} + \alpha_1
                                         J^{-1}, \ldots, n_{jK} + \alpha_K J^{-1}).
\end{eqnarray*}

\subsection{Shape constraint density deconvolution}\label{sec.S1.5}

To ease computation, we approximate the Dirichlet process mixture prior with a finite mixture of Gamma distributions with $K$ components, with a specific Dirichlet prior on the mixture probabilities \citep{ishwaran2002exact}. Specifically, our hierarchical Bayes model for subsequent implementations is as follows. Let $i$ denote the index for subject, and $k$ be the index for the $k$th component, for all $i = 1, \ldots, n$, $k = 1, \ldots, K$. Let $t > 1$ denote a fixed constant. Then,
\begin{eqnarray*}
  &&(W_i|X_i) \sim \text{Normal}(X_i, \sigma_i^2); \enspace
  (X_i|\theta_i) \sim \text{Unif}(-\theta_i, \theta_i); \enspace
  (\theta_i|Z_i = k, \alpha_k, \beta_k) \sim \text{Ga}(\alpha_k,
  \beta_k); \\
  &&P(Z_i = k|p_1, \ldots, p_K) = p_k; \enspace
  (\alpha_k|\lambda, t) \sim \text{Expon}(\lambda; t, \infty); \enspace
  (\beta_k|\Xi_1, \Xi_2) \sim \text{Ga}(\Xi_1, \Xi_2); \\
  &&(p_1 \ldots, p_K) \sim \text{Dir}(m/K, \ldots, m/K),
\end{eqnarray*}
where $\text{Expon}(\lambda; \ell, u)$ denotes an exponential distribution with parameter $\lambda$ truncated at $(\ell, u)$. The truncation of $\alpha_k$ at some $t > 1$ makes the density of $X$ be finite at zero. The set of hyperparameters is $(\lambda, t, \Xi_1, \Xi_2, K, m)$.

For $k = 1, \ldots, K$, let $r_k = \sum_i I_{(Z_i = k)}$ be
the total number of individuals that fall into group $k$ and $s_k = \sum_i \theta_i I_{(Z_i = k)}$ be the summation of the $\theta_i$ from the $k$th group. To sample from the posterior distribution of $\Omega$, we use a Gibbs sampler for all parameters other than the $\alpha_k$, combined with a Metropolis-Hastings within Gibbs for the $\alpha_k$. The posterior full-conditional distributions are
\begin{eqnarray*}\label{eq:posterior}
  (X_i | \mathbf{\Omega}_{-X_i}) &\sim& \mbox{N}(W_i, \sigma_i^2; -\theta_i, \theta_i);\\
  (\theta_i | \mathbf{\Omega}_{-\theta_i}) &\sim& \text{Ga}(\alpha_{Z_i} - 1,
                                         \beta_{Z_i}; |X_i|, \infty);\\
  P(Z_i = k|\mathbf{\Omega}_{-Z_i}) &\propto& \Gamma(\alpha_k)^{-1} p_k (\beta_k \theta_i)^{\alpha_k}
                                     \exp(-\beta_k \theta_i);
  \\
  (p_1, \ldots, p_K | \mathbf{\Omega}_{- \{p_1, \ldots, p_K\}}) &\sim& \text{Dir}(m/K + r_1, \ldots, m/K + r_K);\\
  (\beta_k | \mathbf{\Omega}_{-\beta_k}) &\sim& \text{Ga}(\Xi_1 + \alpha_k r_k, \Xi_2 + s_k);
  \\
  (\alpha_k | \mathbf{\Omega}_{-{\alpha_k}}) &\propto&
                                                \Gamma(\alpha_k)^{-r_k} \exp\{-
                                                \alpha_k(\lambda -
                                                r_k \log \beta_k -
                                                \sum_i
                                                \log(\theta_i) I_{(Z_i = k)})\}.
\end{eqnarray*}
The symbol $\mbox{N}(\mu, \sigma^2; \ell, u)$ denotes a Normal distribution with parameters $(\mu, \sigma^2)$ truncated at $(\ell, u)$, while $\text{Ga}(\alpha, \beta; \ell, u)$ corresponds to a Gamma distribution with parameters $(\alpha, \beta)$ truncated at $(\ell, u)$. Since the posterior distribution of $\alpha_k$ does not belong to a standard family, we implement a Metropolis-Hastings algorithm within the Gibbs sampler to update the $\alpha_k$. We use a Gamma proposal distribution; specifically, $\wt{\alpha}_k \sim \text{Ga}(2, 2/\alpha_k; t, \infty)$, and we accept the proposed $\wt{\alpha}_k$ or keep the original $\alpha_k$ according to the general Metropolis-Hastings rule. The proposal distribution is truncated to reflect the prior assumption on $\alpha_k$.

\section{Additional Lemmas}\label{sec.S1.2}
The following lemma is a straightforward extension to Lemma 10 in \cite{ghosal2007posterior}. It turns out that the same conclusion holds for a Dirichlet-distributed random variable when the sum of its associated parameters have limit zero. 
\begin{lemma}\label{ghosal10}{\rm
  For $(p_1, \ldots, p_N)$ be an arbitrary point in the
  $N$-dimensional unit simplex and let $(X_1, \ldots, X_N)$ be
  Dirichlet distributed with parameter $(\alpha_1, \ldots, \alpha_N)$
  with $\alpha_j \leq 1$ and $\sum_{j=1}^N \alpha_j = m_N$. Suppose $\lim_{N
    \rightarrow \infty} m_N = 0$. Then for every $\epsilon^b < a
  \alpha_j$ and $\epsilon_N \leq 1$, there exists constants $c$ and
  $C$ that depend only on $a$ and $b$ such that 
  \begin{eqnarray*}
    \Pr\bigg(\sum_{j=1}^N|X_j - p_j| \leq 2\epsilon, \min_{1 \leq j \leq N}
    X_j \geq \epsilon^2/2\bigg) \geq C \exp(-c N \log \epsilon^{-1}).
  \end{eqnarray*}
}
\end{lemma}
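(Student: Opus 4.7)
The strategy is to adapt the classical proof of Ghosal--van der Vaart's Lemma 10 via Gamma normalization to the regime $m_N \to 0$. I would represent $X_j = Y_j/S$ with $Y_1,\ldots,Y_N$ independent, $Y_j \sim \mathrm{Ga}(\alpha_j, 1)$ and $S = \sum_k Y_k$. Because the target vector $p_j$ may vanish while we require $\min_j X_j \geq \epsilon^2/2$, I would first replace it by a regularized vector $\tilde p_j = (1-\eta)p_j + \eta/N$ with $\eta$ chosen so that $\eta/N \asymp \epsilon^2$; this gives $\sum_j \tilde p_j = 1$, $\tilde p_j \geq \epsilon^2$, and $\sum_j |\tilde p_j - p_j| \lesssim \epsilon$.

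Next, consider the product event $E = \bigcap_j \{Y_j \in I_j\}$ with $I_j = ((1-\delta)\tilde p_j T,\,(1+\delta)\tilde p_j T)$ for a fixed constant $T$ of order one and $\delta \asymp \epsilon$. Since $\sum \tilde p_j = 1$, on $E$ we have $S \in ((1-\delta)T,(1+\delta)T)$, whence $|X_j - \tilde p_j| \leq 2\delta \tilde p_j$. This delivers both conclusions: $\sum_j |X_j - p_j| \leq 2\delta + \sum_j |\tilde p_j - p_j| \leq 2\epsilon$ and $X_j \geq \tilde p_j (1-\delta)/(1+\delta) \geq \epsilon^2/2$. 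Independence yields $\Pr(E) = \prod_j \Pr(Y_j \in I_j)$, and monotonicity of $y^{\alpha_j - 1}$ on $(0,1]$ (using $\alpha_j \leq 1$) gives the crude but sufficient bound $\Pr(Y_j \in I_j) \gtrsim \delta\,\alpha_j\,(\tilde p_j T)^{\alpha_j}$, after absorbing $\Gamma(\alpha_j + 1) = \alpha_j \Gamma(\alpha_j)$ (bounded for $\alpha_j \in (0,1]$) and $e^{-(1+\delta)\tilde p_j T}$.

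Multiplying over $j$ and using $\sum_j \alpha_j = m_N$ yields $\Pr(E) \gtrsim \delta^N\,\big(\prod_j \alpha_j\big)\,T^{m_N}\,\prod_j \tilde p_j^{\alpha_j}$. The factors $T^{m_N}$ and $\prod_j \tilde p_j^{\alpha_j} = \exp(\sum_j \alpha_j \log \tilde p_j)$ both tend to $1$ because $m_N \to 0$ and $|\log \tilde p_j| \lesssim \log(N/\epsilon)$, so they contribute only a $1 + o(1)$ factor. The essential contribution is $\delta^N \asymp \epsilon^N$ combined with the hypothesis $\alpha_j > \epsilon^b/a$, which gives $\prod_j \alpha_j \geq (\epsilon^b/a)^N$; together, $\Pr(E) \gtrsim \exp\{-(1+b) N \log \epsilon^{-1} - N \log a\}$, which is of the required form $C \exp(-c N \log \epsilon^{-1})$ with constants depending only on $a$ and $b$.

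The main obstacle is coordinating the scale choices consistently: the regularization $\eta$ must satisfy both $\eta/N \asymp \epsilon^2$ (so that $\min X_j \geq \epsilon^2/2$ kicks in via $\tilde p_j \geq \epsilon^2$) and $\eta \lesssim \epsilon$ (so the perturbation cost stays within the $2\epsilon$ budget), which tacitly links $\epsilon$ and $N$ through the hypothesis $\epsilon^b < a \alpha_j$. The new ingredient relative to the original lemma is checking that the vanishing of $m_N$ rescues rather than ruins the argument: one must verify that $T^{m_N}$ and $\prod_j \tilde p_j^{\alpha_j}$ remain bounded, so that the final exponent is driven purely by the $N \log \epsilon^{-1}$ scale coming from $\delta^N$ and $\prod_j \alpha_j$, and is not polluted by a diverging prefactor from the now-singular $\Gamma(\alpha_j)$ normalization.
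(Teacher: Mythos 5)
The paper itself offers no proof of this lemma---it states only that the argument ``can follow exactly the lines of Lemma 10 in Ghosal and van der Vaart (2007)'' and omits it. Your proposal reconstructs precisely that argument: the Gamma-variate representation $X_j = Y_j/S$, the product-interval event, the bound on $\Pr(Y_j \in I_j)$ exploiting that $y \mapsto y^{\alpha_j-1}e^{-y}$ is decreasing when $\alpha_j \le 1$, and the extraction of $\Gamma(\alpha_j)^{-1}\gtrsim\alpha_j$. You also correctly isolate the only place where $m_N\to 0$ could conceivably cause trouble---the factors $T^{m_N}$ and $\prod_j \tilde p_j^{\alpha_j}$ coming from $\prod_j(\tilde p_j T)^{\alpha_j}$---and you correctly verify they are harmless, which is exactly the ``straightforward extension'' the paper asserts. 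So this is essentially the same route.

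The one place where your write-up is not tight is the source of the scale constraint $\eta\lesssim\epsilon$, equivalently $N\epsilon^2\lesssim\epsilon$, i.e.\ $N\epsilon\lesssim 1$. You attribute this to the hypothesis $\epsilon^b < a\alpha_j$: summing gives $N\epsilon^b < a\,m_N$, hence $N\epsilon < a\,m_N\,\epsilon^{1-b}$, and with $m_N\to 0$ this delivers $N\epsilon\lesssim 1$ only when $b\le 1$; for $b>1$ the factor $\epsilon^{1-b}$ diverges and the chain breaks. The constraint $N\epsilon\lesssim 1$ is genuinely needed---if, say, $p_1=1$ and $p_j=0$ for $j\ge 2$, then the event $\{\sum_j|X_j-p_j|\le 2\epsilon,\ \min_j X_j\ge\epsilon^2/2\}$ is empty once $(N-1)\epsilon^2/2>\epsilon$, so no nontrivial lower bound is possible without it. In the original Ghosal--van der Vaart lemma this appears as an explicit hypothesis $\epsilon\le 1/N$, and in the paper's statement the cryptic ``$\epsilon_N\le 1$'' is almost surely a typo for ``$\epsilon N\le 1$.'' You should invoke that condition directly rather than deriving it from $\epsilon^b<a\alpha_j$; with it in hand, your regularization and the rest of the argument go through for every $b>0$, and the residual $-N\log a$ in the exponent can be absorbed into $-cN\log\epsilon^{-1}$ since $\log\epsilon^{-1}\ge\log N$.
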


The proof of the above lemma can follow exactly the lines of Lemma
10 in \cite{ghosal2007posterior} and thus is omitted.

\end{document}